\newtheorem{theorem}{Theorem}
\newtheorem{lemma}{Lemma}
\newtheorem{corollary}{Corollary}
\newtheorem{definition}{Definition}
\newtheorem*{open*}{Open Problem}
\newcommand\restr[2]{{% we make the whole thing an ordinary symbol
  \left.\kern-\nulldelimiterspace % automatically resize the bar with \right
  #1 % the function
  \vphantom{\big|} % pretend it's a little taller at normal size
  \right|_{#2} % this is the delimiter
  }}
\newcommand{\E}{\mathbb{E}}
\begin{document}

\title{Faster algorithm for Unique $(k,2)$-CSP} 
\author{Or Zamir \\ Institute for Advanced Study, Princeton, NJ USA}
\date{}
\maketitle

\begin{abstract}
In a $(k,2)$-Constraint Satisfaction Problem we are given a set of arbitrary constraints on pairs of $k$-ary variables, and are asked to find an assignment of values to these variables such that all constraints are satisfied. 
The $(k,2)$-CSP problem generalizes problems like $k$-coloring and $k$-list-coloring.
In the Unique $(k,2)$-CSP problem, we add the assumption that the input set of constraints has at most one satisfying assignment.

Beigel and Eppstein gave an algorithm for $(k,2)$-CSP running in time $O\left(\left(0.4518k\right)^n\right)$ for $k>3$ and $O\left(1.356^n\right)$ for $k=3$, where $n$ is the number of variables. 
Feder and Motwani improved upon the Beigel-Eppstein algorithm for $k\geq 11$.
Hertli, Hurbain, Millius, Moser, Scheder and Szedl{\'a}k improved these bounds for Unique $(k,2)$-CSP for every $k\geq 5$.

We improve the result of Hertli et al. and obtain better bounds for Unique~$(k,2)$-CSP for~$k\geq 5$. In particular, we improve the running time of Unique~$(5,2)$-CSP from~$O\left(2.254^n\right)$ to~$O\left(2.232^n\right)$ and Unique~$(6,2)$-CSP from~$O\left(2.652^n\right)$ to~$O\left(2.641^n\right)$.

Recently, Li and Scheder also published an improvement over the algorithm of Hertli et al. in the same regime as ours. Their improvement does not include quantitative bounds, we compare the works in the paper. 
\end{abstract}

\section{Introduction}
The general Constraint Satisfaction Problem, in which we are asked to find an assignment to a set of variables that satisfies a list of arbitrary constraints, is NP-Complete. 
Furthermore, it is widely believed that a substantial improvement over the naive exhaustive search is unlikely for the general CSP problem and even for special cases of it like The \emph{Boolean} Satisfiability Problem (SAT).
Nevertheless, when the structure of the input is restricted in a certain manner, there are known improvements in the form of \emph{moderately exponential} algorithms. These are algorithms that still have an exponential running time, yet achieve an exponential improvement over the exhaustive search bounds.

The study of moderately exponential algorithms for NP-Complete problems is extensive.
In fact, exponential yet better-than-naive algorithms for NP-Complete problems were known for some problems, for example The Travelling Salesman Problem, long before the definition of NP.
A survey of Woeginger \cite{woeginger2003exact} covers and refers to dozens of papers exploring such algorithms for many problems including satisfiability, graph coloring, knapsack, TSP, maximum independent sets and more.
Subsequent review article of Fomin and Kaski \cite{fomin2013exact} and book of Fomin and Kratsch \cite{fomin2010exact} further cover the topic of exact exponential-time algorithms.

Two of the most notable problems for which the study of moderately exponential algorithms was fruitful are $k$-satisfiability (usually abbreviated as $k$-SAT) and graph coloring.

For satisfiability, the running time of the trivial algorithm enumerating over all possible assignments is $O^*(2^n)$.
No algorithms solving SAT in time $O^*\left(\left(2-\varepsilon\right)^n\right)$ for any $\varepsilon>0$ are known, and a popular conjecture called The \emph{Strong} Exponential Time Hypothesis \cite{calabro2009complexity} states that no such algorithm exists. 
On the other hand, for every fixed $k$ there exists a constant $\varepsilon_k>0$ such that $k$-SAT (i.e., SAT on formulas in CNF form with at most~$k$ literals in every clause) can be solved in $O^*\left(\left(2-\varepsilon_k\right)^n\right)$ time.
A result of this type was first published by Monien and Speckenmeyer in 1985 \cite{monien1985solving}. 
A long list of improvements for the values of $\varepsilon_k$ were published since, including the celebrated 1998 PPSZ algorithm of Paturi, Pudl{\'a}k, Saks and Zane \cite{paturi2005improved} and the recent improvement over it by Hansen, Kaplan, Zamir and Zwick \cite{hansen2019faster}.
The PPSZ bound was originally obtained only for the case of input formulas with a unique satisfying assignment, its analysis was extended to the general case more than a decade later by Hertli~\cite{Hertli14b}.

For graph coloring, i.e, the problem of deciding whether a graph is $k$-colorable, the naive exhaustive search algorithm takes $O(k^n)$ time. 
Nevertheless, it is known that computing the chromatic (or coloring) number of a graph (i.e., the smallest~$k$ for which the graph is $k$-colorable) can be done in exponential time that does not depend on~$k$.
The first such result was an $O^*(3^n)$ algorithm of Lawler \cite{lawler1976note}. 
A long line of works followed until finally an algorithm computing the chromatic number in $O^*(2^n)$ time was devised by Bj{\"o}rklund, Husfeldt and Koivisto in 2009 \cite{bjorklund2009set}. This is conjectured to be optimal.
For~$k\leq 6$ there are algorithms solving $k$-colorability exponentially faster than~$O(2^n)$~\cite{beigel20053}\cite{mycoloring}. It is currently not known even if $7$-coloring can be solved exponentially faster than the general $O(2^n)$ bound of computing the coloring number.
One of the biggest open problems in this field is whether $k$-coloring can be solved in $O^*\left(\left(2-\varepsilon_k\right)^n\right)$ time for every fixed~$k$.

Both examples are special cases of the more general Constraint Satisfaction Problem.
In an $(a,b)$-formula, we have~$n$ variables such that each of them can take a value in~$[a]:=\{1,\ldots,a\}$ and a list of constraints such that each constraint may depend on at most~$b$ variables.
Every such constraint can be equivalently replaced by a disjunction of at most~$a^b$ constraints of the form~$\left(x_1\neq c_1 \;\vee\; x_2\neq c_2 \;\vee\; \ldots \;\vee\; x_b\neq c_b\right)$ where~$x_1,\ldots,x_b$ are (not necessarily distinct) variables and~$c_1,\ldots,c_b \in [a]$ are possible values.
Thus, we can think of every $(a,b)$-formula as a list of constraints of that form.
In the $(a,b)$-CSP problem we are given a $(a,b)$-formula and need to decide whether or not there is an assignment to the variables that satisfies all constraints. 
In the Unique $(a,b)$-CSP problem we add the assumption that if there is such an assignment it is unique.
Note that $k$-SAT is the same as $(2,k)$-CSP, and that $k$-coloring is a special case of $(k,2)$-CSP. We later elaborate on the close relation between the $(k,2)$-CSP and $k$-coloring problems.

In this paper we focus on obtaining better algorithms for Unique $(k,2)$-CSP.

\subsection{Possible running time}
Denote by~$c_{a,b}$ the infimum of constants~$c$ such that $(a,b)$-CSP on formulas with~$n$ variables can be solved in $O\left(\left(c+o(1)\right)^{n}\right)$ time.
Naively, $c_{a,b}\leq a$ as we can simply try all~$a^n$ possible assignments to the variables.
A simple improvement comes from the use of \emph{down-sampling}. 
Given a $(a,b)$-formula we may randomly restrict each variable to~$a'<a$ uniformly chosen values. Each satisfying assignment is not ruled out by the restriction with probability~$\left(\frac{a'}{a}\right)^n$. 
After this down-sampling step, we are left with a $(a',b)$-formula.
Thus, for every~$a'<a$ we have~$c_{a,b}\leq \frac{a}{a'} \cdot c_{a',b}$.
In particular, $c_{a,b} \leq \frac{a}{2} c_{2,b}$.
As we know that $k$-SAT can be solved exponentially faster than~$O(2^n)$ for every fixed~$k$, we have that~$c_{2,b}<2$ and in particular the strict inequality~$c_{a,b}<a$ holds for every~$a,b$.

On the other hand, $(a,b)$-CSP is clearly NP-Complete for every~$a>1$ except of the special case of~$(a,b)=(2,2)$ (which is the polynomial $2$-SAT).
The Exponential Time Hypothesis \cite{calabro2009complexity} states that there exists some constant~$c>0$ such that $3$-SAT takes $\Omega(2^{cn})$ time to solve.
Traxler~\cite{traxler2008time} showed that assuming the Exponential Time Hypothesis, there exists some~$c'>0$ such that~$c_{k,2}>k^{c'}$. Namely, even for~$b=2$ the $(k,2)$-CSP problem becomes strictly more complex as~$k$ increases. 

\subsection{$(k,2)$-CSP and $k$-Coloring}
Consider the following hierarchy of three problems:
\begin{itemize}
\item \textbf{$k$-Coloring:} given a graph with $n$ vertices, determine whether it is $k$-colorable.
\item \textbf{$k$-List-Coloring:} given a graph with $n$ vertices and a list of at most~$k$ allowed colors (from a possibly larger universe of colors) for each vertex, determine if there is a proper coloring of the graph such that each vertex is colored with one of the allowed colors in its list. 
\item \textbf{$(k,2)$-CSP:} given $n$ variables that can admit values from~$[k]$ and a list of arbitrary constraints involving one or two variables each, determine if there is an assignment of values to the variables that satisfies all constraints.
\end{itemize}

Each problem is a special case of the next one.
Every instance of $k$-coloring is also an instance of $k$-list-coloring where all lists are simply~$[k]$. Every instance of $k$-list-coloring is also an instance of $(k,2)$-CSP.
Nevertheless, while $k$-coloring and $k$-list-coloring can both be solved in~$O^*(2^n)$ time regardless of~$k$~\cite{bjorklund2009set}, Traxler's reduction~\cite{traxler2008time} shows that there is some constant~$k_0$ such that for every~$k>k_0$ we have~$c_{k,2} \geq 2$. Let $k_0$ be the minimal such constant.
It is currently known that~$c_{4,2}<2$ and thus~$k_0\geq 4$.

In \cite{mycoloring} it was recently shown that if $k$-list-coloring can be solved in~$O^*\left(\left(2-\varepsilon\right)^n\right)$ time for some~$\varepsilon>0$ then $\left(k+2\right)$-coloring can also be solved in~$O^*\left(\left(2-\varepsilon'\right)^n\right)$ for some~$\varepsilon'>0$.
In particular, $\left(k_0+2\right)$-coloring can be solved exponentially faster than~$O(2^n)$.
As~$k_0\geq 4$, this resulted in the first~$O^*\left(\left(2-\varepsilon\right)^n\right)$ time algorithms for $5$-coloring and $6$-coloring.
This gives a strong motivation for improving upper bounds for $(k,2)$-CSP, with the goal of improving the bound on~$k_0$.
In particular, showing that $(5,2)$-CSP can be solved in~$O^*\left(\left(2-\varepsilon\right)^n\right)$ time would result in the first~$O^*\left(\left(2-\varepsilon\right)^n\right)$ time algorithm for $7$-coloring.

\subsection{Previous results}
By the down-sampling argument we can reduce $(k,2)$-CSP to the polynomial $(2,2)$-CSP (i.e.,  $2$-SAT) and get an expected running time of $O\left(\left(\frac{k}{2}\right)^n\right)$.
Beigel and Eppstein \cite{beigel20053} gave an algorithm for $(k,2)$-CSP running in time $O\left(\left(0.4518k\right)^n\right)$ for $k>3$ and $O\left(1.356^n\right)$ for $k=3$.
Feder and Motwani~\cite{feder2002worst} give a $(k,2)$-CSP algorithm based on the $k$-SAT PPZ algorithm, which is the predecessor of the PPSZ one. They improve on the bound of Beigel and Eppstein only for~$k\geq 11$.
Hertli, Hurbain, Millius, Moser, Scheder and Szedl{\'a}k \cite{hertli2016ppsz} improved the bounds for Unique $(k,2)$-CSP for every~$k\geq 5$.
Several other works \cite{schoning1999probabilistic} \cite{li2008k} \cite{scheder2010ppz} focus on the case where~$b>2$.

\subsection{Our contribution}

\begin{figure}
\begin{center}
\begin{tabular}{ |c|c|c|c|c|c| } 
\hline
\textbf{k} & Downsampling+2SAT & PPZ FM \cite{feder2002worst} & BE \cite{beigel20053} & PPSZ  \cite{hertli2016ppsz} & \textbf{Our algorithm} \\
\hline
3 & 1.5 & 1.818 & \emph{1.365} & 1.434 & - \\ 
4 & 2 & 2.214 & \emph{1.808} & 1.849 & - \\ 
5 & 2.5 & 2.606 & 2.259 & \emph{2.254} & \textbf{2.232} \\
6 & 3 & 2.994 & 2.711 & \emph{2.652} & \textbf{2.641} \\
7 & 3.5 & 3.381 & 3.163 & \emph{3.045} & \textbf{3.042} \\
\hline
\end{tabular}
\end{center}
\caption{Comparisons of the exponent base in Unique $(k,2)$-CSP algorithms}
\label{table}
\end{figure}

We present an algorithm that improves on the result of Hertli et al. and obtain better bounds for Unique~$(k,2)$-CSP for~$k\geq 5$. In particular, we improve the running time of Unique~$(5,2)$-CSP from~$O\left(2.254^n\right)$ to~$O\left(2.232^n\right)$ and Unique~$(6,2)$-CSP from~$O\left(2.652^n\right)$ to~$O\left(2.641^n\right)$. 
Our result is compared to the previous ones in Table~\ref{table}.

We obtain our result by combining the strengths of both PPSZ and the Beigel-Eppstein algorithms.
Intuitively, we make the following insight regarding PPSZ-type algorithms.
Throughout the run of the PPSZ algorithm, it slowly manipulates the CSP formula.
For every~$k'<k$ there is some time-point such that if we stop the algorithm at that point then the formula roughly looks like a $(k',2)$-CSP formula. 
Furthermore, the rest of the algorithm run would have looked similar to running PPSZ on a $(k',2)$-formula.
Thus, for~$k\geq 5$ we may stop the run of the PPSZ algorithm when the formula looks similar to a $(4,2)$-CSP or $(3,2)$-CSP formula, and then switch to using the Beigel-Eppstein algorithm which is faster than PPSZ for~$k\leq 4$.

In Section~\ref{sec:prev} we give an extensive overview of the previous results we need to use. Then in Section~\ref{sec:ours} we introduce our algorithm and obtain the improved bounds.
In Section~\ref{sec:improve} we sketch possible improvements to the analysis of Section~\ref{sec:ours} and show that even with a completely ideal analysis of our algorithm we would improve the bound for~$k=5$ from~$O(2.232^n)$ only to~$O(2.223^n)$.
Thus, to prove that~$k_0\geq 5$, if true, additional algorithmic tools are necessary.
In Section~\ref{sec:conclude} we conclude our work, discuss more possible uses for the PPSZ-related observations, and present open problems. 

\subsection{Comparison with a recent work of Li and Scheder}
Recently, Li and Scheder \cite{li2021impatient} also published an improvement for the PPSZ-type algorithm of Hertli et al. for Unique CSP.
Their algorithm does not use the Beigel-Eppstein algorithm and just modifies the PPSZ-type algorithm itself in a different manner to ours.
They prove that this modification gives an exponential improvement over the bounds of Hertli et al., but do not give a quantitative bound of this improvement.
We could not find a way to compute such a bound from their paper, but suspect this improvement is very small.

Li and Scheder also observe that during the run of the PPSZ-type algorithm when the number of color choices of a variable is very low (in their algorithm, when it reaches~$2$), then there are better ways to settle the value of the variable than continuing the run of the PPSZ algorithm. In their case, they do so by randomly picking one of the two colors for the variable.

Our work can be seen as a refinement of this idea in two different ways. First, we cut off the PPSZ run with small color sets yet larger than two. 
Second, we resolve the remaining instance with a variant of the Beigel-Eppstein algorithm, which is much better than a random choice. 

\section{Relevant overview of previous work}\label{sec:prev}
In this section we give an overview of all previous results that are used in our algorithm. 
We repeat and refine some of the theorems used in these papers for their later use in Section~\ref{sec:ours}.

\subsection{The algorithm of Beigel and Eppstein}\label{sec:BE}
The algorithm of Beigel and Eppstein \cite{beigel20053} solves a CSP by performing a series of local reductions that either reduce the number of variables in the CSP or the number of allowed values in some of the variables.

An example for such a local reduction that is of particular interest to us follows.
\begin{lemma}\label{be2}[Lemma 2 of \cite{beigel20053}]
Let~$(V,F)$ be a CSP in which each variable~$x\in V$ has~$k(x)$ allowed values.
Let~$x$ be a variable with~$k(x)=2$, then there exists a set~$F'$ of additional constraints, each of size two, such that~$(V,F)$ is satisfiable if and only if~$(V\setminus\{x\},F\cup F')$ is satisfiable, with the same number of allowed values for each variable~$y\neq x$.
\end{lemma}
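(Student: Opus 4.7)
The plan is to perform a resolution-style elimination of the variable $x$. Let $a$ and $b$ denote the two allowed values of $x$. First I would normalise the constraints of $F$ that mention $x$: any constraint $(x\neq c)\vee(y\neq d)$ with $c\notin\{a,b\}$ is automatically satisfied by every assignment respecting the domain of $x$, so it can be discarded, and a constraint of the form $(x\neq a)\vee(x\neq b)$ is vacuous for the same reason. After this cleanup, and after absorbing any unary constraints $x\neq a$ (which force the domain of $x$ down to one value and are most cleanly handled as a simpler preliminary case), every remaining $x$-clause falls into one of two groups: the $a$-clauses $(x\neq a)\vee(y\neq d)$ and the $b$-clauses $(x\neq b)\vee(z\neq e)$.

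Next I would define $F'$ as the set of all binary resolvents across these two groups: for every $a$-clause $(x\neq a)\vee(y\neq d)$ and every $b$-clause $(x\neq b)\vee(z\neq e)$, include the constraint $(y\neq d)\vee(z\neq e)$. The new instance is then $(V\setminus\{x\},(F\setminus F_x)\cup F')$, where $F_x$ denotes the set of constraints touching $x$. By construction every resolvent lives on $V\setminus\{x\}$ and involves at most two variables, and no variable has its allowed-values set touched, so the domain sizes for each $y\neq x$ are preserved exactly as required.

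To establish the equivalence I would argue both directions separately. The forward direction is immediate: given a satisfying assignment of $(V,F)$, its restriction to $V\setminus\{x\}$ must satisfy every resolvent, since otherwise both $x\neq a$ and $x\neq b$ would be forced and this would contradict $x\in\{a,b\}$. The reverse direction is the substantive step. Given an assignment $\alpha$ of $V\setminus\{x\}$ satisfying $(F\setminus F_x)\cup F'$, let $A(\alpha)$ be the set of $a$-clauses whose companion literal $(y\neq d)$ is false under $\alpha$, and define $B(\alpha)$ analogously. If $A(\alpha)=\emptyset$ then extending by $x=a$ satisfies every $a$-clause; symmetrically $B(\alpha)=\emptyset$ yields $x=b$; and both cannot be nonempty, since choosing one clause from each would give a resolvent in $F'$ falsified by $\alpha$.

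The main obstacle I anticipate is bookkeeping of degenerate cases rather than any deep argument. In particular I would have to handle unary $x$-clauses consistently (these are most cleanly dispatched by a domain reduction before invoking the lemma, since two such unary clauses of opposite type immediately expose unsatisfiability and cannot be encoded as a size-two resolvent), to confirm that resolvents collapsing to involve a single variable are still ``of size two'' under the author's convention of ``at most two variables per constraint'', and to check carefully that the clauses discarded during normalisation really are trivially satisfied by any extension of $\alpha$. Once these details are verified, the resolution construction above delivers the equivalence cleanly.
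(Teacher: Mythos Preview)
The paper does not supply its own proof of this lemma; it simply quotes it as Lemma~2 of Beigel--Eppstein and uses it as a black box. Your resolution-style elimination of the two-valued variable is exactly the standard argument (and is essentially how Beigel and Eppstein establish it in the cited paper), and the two directions of the equivalence are argued correctly, so there is nothing to compare against here beyond noting that your proposal matches the intended source proof.
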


The result claimed in \cite{beigel20053} is that their algorithm solves $(3,2)$-CSPs in time~$O\left(1.3645^n\right)$ and $(k,2)$-CSPs for~$k>3$ in time~$O\left(\left(0.4518k\right)^n\right)$. 
Note that $1.3645 > 1.3554 = 0.4518\cdot 3$.
In fact, the result proved in their paper is slightly stronger than that. 
The following Theorem follows from \cite{beigel20053}.

\begin{theorem}\label{be34}[Section 5 of \cite{beigel20053}]
Let~$(V,F)$ be a CSP with~$|V|=n_3+n_4$ variables such that~$n_3$ variables have three allowed values and~$n_4$ variables have four allowed values, then we can solve it in time~$O\left(1.3645^{n_3} \cdot 1.8072^{n_4}\right)$.
\end{theorem}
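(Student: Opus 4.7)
The plan is to extract the refined running-time bound from the branching analysis already present in Section~5 of Beigel and Eppstein \cite{beigel20053}. Their algorithm is recursive: at each call it performs polynomial work and then either deterministically simplifies the instance (for example via Lemma~\ref{be2}) or branches into a constant number of subproblems by restricting the color list of some pivot variable. It therefore suffices to bound the number of leaves $L(n_3,n_4)$ of the recursion tree.

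First I would introduce the two-parameter weighted potential
\[
\mu(V,F) \;=\; n_3 \log 1.3645 \;+\; n_4 \log 1.8072,
\]
and reduce the theorem to the standard branching-vector inequality
\[
\sum_i 1.3645^{-\Delta_3^{(i)}} \cdot 1.8072^{-\Delta_4^{(i)}} \;\leq\; 1
\]
for every branching rule, where $(\Delta_3^{(i)}, \Delta_4^{(i)})$ records the decreases in $(n_3, n_4)$ along the $i$-th branch of the rule (with negative entries allowed, for instance when a 4-variable is converted to a 3-variable). An induction on $\mu$ then yields $L(n_3,n_4) \le 1.3645^{n_3} \cdot 1.8072^{n_4}$, which is the claimed running time up to the polynomial factor for the work at each recursion node.

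Next I would verify this inequality rule-by-rule against the Beigel-Eppstein catalogue of reductions. Rules with a 3-domain pivot never touch $n_4$, so they specialize to the pure $(3,2)$-CSP recurrence that Beigel and Eppstein already show is tight at base $1.3645$. Rules with a 4-domain pivot are the interesting case: branches may delete the pivot outright, shrink its list (possibly converting the 4-variable into a 3-variable, so $\Delta_3 = -1$, $\Delta_4 = 1$), and in many cases propagate color removals to neighbors via the constraint graph. Beigel and Eppstein chose these rules so that in the pure $(4,2)$ setting the branching number is exactly $1.8072$; the key numerical identity is that the implicit weight their pure $(4,2)$ analysis assigns to a freshly-reduced 3-variable is precisely $\log 1.3645 / \log 1.8072$ times that of a 4-variable, which is exactly what the mixed potential $\mu$ encodes.

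The main obstacle is carrying out the explicit verification for the more intricate 4-domain pivots of Section~5 of \cite{beigel20053}, where neighborhood structure in the constraint graph is exploited to obtain favorable multi-variable branching vectors. The check is mechanical, but the branch vectors are not all of the same shape and some rules mix 3- and 4-variable decreases in nontrivial ways, so one has to substitute each vector from their tables into the mixed inequality and confirm the sum remains bounded by one. No new algorithmic idea is required beyond Beigel and Eppstein's; the theorem amounts to reading their branching vectors through a two-parameter potential rather than a one-parameter one.
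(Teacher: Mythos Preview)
The paper does not give its own proof of this theorem at all: it is stated as a citation, with the bracketed remark ``[Section 5 of \cite{beigel20053}]'' serving in lieu of a proof. The authors simply assert that ``the following Theorem follows from \cite{beigel20053}'' and move on to use it as a black box inside Theorem~\ref{eBE}.

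Your proposal therefore goes well beyond what the paper does. What you outline --- a two-parameter potential $\mu = n_3\log 1.3645 + n_4\log 1.8072$ and a rule-by-rule verification that every branching vector in Beigel--Eppstein's catalogue satisfies the corresponding weighted inequality --- is exactly how Beigel and Eppstein themselves obtain the mixed bound in their Section~5; their $(4,2)$ analysis already tracks $3$-variables and $4$-variables with separate weights, and those weights are precisely the ones you write down. So your sketch is faithful to the original source, not an alternative route. The only caveat is that your proposal remains a plan: the actual verification of each branching rule against the two-parameter inequality is deferred to ``mechanical'' checking of the Beigel--Eppstein tables, which is work you have not carried out here. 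For the purposes of the present paper that is fine, since the present paper does not carry it out either and treats the statement as an imported result.
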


The claimed results follow from Theorem~\ref{be34} immediately. 
For $(3,2)$-CSPs we simply have~$n_3=n,\;n_4=0$ and for $(k,2)$-CSPs with $k>3$ we down-sample each variable to~$4$ out of its~$k$ possible values and then use the theorem with~$n_3=0,\;n_4=n$, with a total expected run-time of~$O\left(\left(\frac{k}{4}\right)^{n} \cdot 1.8072^{n}\right) = O\left(\left(0.4518k\right)^n\right)$.
Nevertheless, for our use we need to fully use the power of Theorem~\ref{be34} and we even slightly refine it with the following statement, from now on referred to as \emph{the extended BE algorithm}.

\begin{theorem}\label{eBE}
Denote by
\[   
BE(i) := 
     \begin{cases}
       1 &\quad\text{if } i\leq 2\\
       1.3645 &\quad\text{if }i=3 \\
       0.4518\cdot i &\quad\text{if }i\geq 4
     \end{cases}
.\]
Let~$(V,F)$ be a CSP in which each variable~$x\in V$ has~$k(x)$ allowed values. 
Let~$n_i$ be the number of variables~$x$ in~$V$ such that~$k(x)=i$. 
Then, we can solve~$(V,F)$ in~$O\left(\prod_{i} BE(i)^{n_i}\right)$ expected time.
\end{theorem}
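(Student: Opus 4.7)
The plan is to reduce to \Cref{be34} by pre-processing the variables whose allowed-value count falls outside the range $\{3,4\}$. The three cases of the definition of $BE(i)$ correspond to three different kinds of handling: $i\le 2$ is eliminated at no exponential cost, $i\in\{3,4\}$ is fed directly into \Cref{be34}, and $i\ge 5$ is shrunk down to four allowed values by random down-sampling, paying exactly the factor $i/4$ per variable that turns $1.8072$ into $0.4518\cdot i = BE(i)$.

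First I would dispose of the ``free'' variables. A variable with $k(x)=1$ has its value forced, so we just substitute it and discard the constraints that become trivially satisfied or unsatisfied in linear time. A variable with $k(x)=2$ is eliminated using \Cref{be2}: it replaces $x$ by a set $F'$ of new binary constraints on $V\setminus\{x\}$ without changing $k(y)$ for any $y\ne x$. Iterating these two reductions until they no longer apply drives $n_1=n_2=0$ while leaving $n_i$ for $i\ge 3$ unchanged, and costs only polynomial time, which accounts for the $BE(1)=BE(2)=1$ factors.

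Next I would down-sample the high-arity variables. For every variable $x$ with $k(x)=i\ge 5$, pick uniformly at random a $4$-element subset of its current list of allowed values and restrict $x$ to that subset. A fixed satisfying assignment survives this restriction on a single variable with probability $4/i$, and the choices across variables are independent, so the probability that a particular satisfying assignment survives all restrictions is $\prod_{i\ge 5}(4/i)^{n_i}$. Repeating the whole procedure $\prod_{i\ge 5}(i/4)^{n_i}$ times in expectation therefore guarantees we eventually produce a restricted instance that is satisfiable if and only if the original one is. After down-sampling, every remaining variable has $k(x)\in\{3,4\}$, with counts $n_3$ and $n_4+\sum_{i\ge 5}n_i$ respectively.

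Finally I would invoke \Cref{be34} on the restricted instance, paying $1.3645^{n_3}\cdot 1.8072^{n_4+\sum_{i\ge 5}n_i}$ per repetition. Multiplying by the expected number of repetitions gives
\[
\prod_{i\ge 5}\left(\tfrac{i}{4}\right)^{n_i}\cdot 1.3645^{n_3}\cdot 1.8072^{n_4}\cdot\prod_{i\ge 5}1.8072^{n_i}
\;=\;1.3645^{n_3}\cdot 1.8072^{n_4}\cdot\prod_{i\ge 5}(0.4518\cdot i)^{n_i},
\]
which equals $\prod_i BE(i)^{n_i}$ by the definition of $BE$, completing the proof. There is no real obstacle here: the content of the theorem is essentially bookkeeping around \Cref{be34} and \Cref{be2}, and the only thing to be careful about is that down-sampling is performed \emph{before} calling \Cref{be34} so that the factors $i/4$ and $1.8072$ combine cleanly into $0.4518\cdot i$.
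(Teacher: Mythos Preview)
Your proposal is correct and follows exactly the same approach as the paper: eliminate variables with at most two allowed values (trivially for one, via Lemma~\ref{be2} for two), down-sample variables with more than four allowed values to four, and then apply Theorem~\ref{be34}. The paper's proof is more terse, but the ideas are identical.
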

\begin{proof}
The~$n_1$ variables with a single possible value can be ignored. 
The~$n_2$ variables with two possible values can be eliminated using Lemma~\ref{be2}.
For every~$i>4$ we use down-sampling to reduce the number of allowed values to four.
We finally apply Theorem~\ref{be34}.
\end{proof}

\subsection{The PPZ and PPSZ-type algorithms}\label{sec:PPZPPSZ}
In this section we present an overview of \cite{feder2002worst} and \cite{hertli2016ppsz}.
We state theorems of both papers and their variants that are useful for our analysis, and adapt some of their notation.
Throughout the section we discuss only instances with a unique satisfying assignment.

\begin{definition} [$D$-implication]
Let $F$ be a $(k,2)$-CSP formula over a set $V$ of variables, $x\in V$ be a variable, $c\in [k]$ be a possible value, $\alpha_0$ a partial assignment, and $D\in \mathbb{N}$.
We say that $\alpha_0$ $D$-implies $x\neq c$ and write $\alpha_0 \models_D \left(x\neq c\right)$ if there is a subset of constraints $G\subseteq F$ of size $|G|\leq D$ such that $G\wedge \alpha_0$ implies $\left(x\neq c\right)$.
\end{definition}

By enumeration, we can check whether $\alpha_0 \models_D \left(x\neq c\right)$ in $O\left(|F|^D \cdot poly(n)\right)$ time, which is polynomial in $n,k$ if $D$ is a constant and sub-exponential in $n$ even if $D$ is a slow-enough growing function of $n$.
For the rest of the section we fix $D$.

\begin{definition} [Eligible values]
Let $F$ be a $(k,2)$-CSP formula over a set $V$ of variables, $\alpha_0$ a partial assignment, and $x\in V\setminus V(\alpha_0)$ a variable which value is not assigned in $\alpha_0$.
We denote by 
\[
\mathcal{A}\left(x, \alpha_0 \right) := 
\{
c\in [k] \;|\; \alpha_0 \not\models_D \left(x\neq c\right)
\}
\]
the set of all possible values for $x$ that are not ruled out by $D$-implication from $\alpha_0$.
\end{definition}

We can now describe the PPSZ algorithm (adapted from SAT \cite{PPSZ98} to CSP in \cite{hertli2016ppsz}).
Given a $(k,2)$-CSP $F$, we begin with $\alpha_0 = \emptyset$ the empty assignment and incrementally add variables to it, hoping to finish with a satisfying assignment.
In particular, we choose a permutation $\pi$ of the variables $V$ uniformly at random, and then choose an assignment for the variables of $V$ one-by-one according to the order of $\pi$.
When we reach a variable $x$, we compute $\mathcal{A}\left(x, \alpha_0 \right)$ in sub-exponential time, pick a uniformly random $c \sim U\left(\mathcal{A}\left(x, \alpha_0 \right)\right)$, and extend $\alpha_0$ by setting $\alpha_0(x)=c$.

\begin{algorithm}
\caption{The PPSZ algorithm}\label{alg:ppsz}
Pick a uniform random permutation $\pi$ of the set $V$ of variables\;
Set $\alpha_0=\emptyset$\;
\For{$x\in V$ in the order dictated by $\pi$}
{
Draw $c \sim U\left(\mathcal{A}\left(x, \alpha_0 \right)\right)$ uniformly\;
Set $\alpha_0(x) := c$\;
}
Return $\alpha_0$\;
\end{algorithm}

Algorithm~\ref{alg:ppsz} runs in sub-exponential time and returns some assignment $\alpha_0$ to all variables of $V$.
It is clear that the probability of $\alpha_0$ to satisfy $F$, assuming that $F$ is satisfiable, is at least $k^{-n}$.
We next prove that for formulas $F$ with exactly one satisfying assignment $\alpha$, the probability that Algorithm~\ref{alg:ppsz} produces the satisfying assignment $\alpha_0=\alpha$ is in fact exponentially larger. For the rest of the section we assume that $F$ has a unique satisfying assignment and denote it by $\alpha$.

\begin{definition}[Ultimately eligible values]
Let $F$ be a $(k,2)$-CSP formula uniquely satisfied by $\alpha$, $\pi$ be a permutation of its variables $V$ and $x\in V$ some variable.\\
We let $V_{\pi, x} := \{y \in V \;|\; \pi(y)<\pi(x) \}$ be the set of all variables appearing before $x$ in $\pi$, $\alpha_{\pi,x} := \restr{\alpha}{V_{\pi,x}}$ be the partial assignment resulting by restricting $\alpha$ to $V_{\pi,x}$ and then we denote by
$
\mathcal{A} \left(x, \pi \right) :=
\mathcal{A} \left(x, \alpha_{\pi,x}\right)
$
the set of all possible values for $x$ that are not ruled out by $D$-implication when we reach $x$ in a PPSZ iteration with permutation $\pi$, given that all previous variables were set correctly.
\end{definition}

We observe that Algorithm~\ref{alg:ppsz} returns $\alpha$ if and only if it draws the correct value for \emph{every} variable. In particular, for a specific permutation $\pi$ the probability of success is exactly $\prod_{x\in V} \frac{1}{|\mathcal{A} \left(x, \pi \right)|}$.
For a random permutation then, the probability of success is
\[
\E_\pi \left[\prod_{x\in V} \frac{1}{|\mathcal{A} \left(x, \pi \right)|}\right]
\geq
k ^ {-\sum_{x\in V} \E_\pi\left[\log_k |\mathcal{A} \left(x, \pi \right)|\right] }
\]
where we use Jensen's inequality.
In particular, it is enough to give an upper bound on $\E_\pi\left[\log_k |\mathcal{A} \left(x, \pi \right)|\right]$ that holds for every variable $x$.

\subsection{The PPZ-type algorithm of Feder and Motwani}\label{subsec:PPZ}
In the PPZ-type variant of Feder and Motwani \cite{feder2002worst}, a simpler variant where $D=1$ is presented and analysed.
Namely, a possible value~$c$ for a variable~$x$ is ruled out if and only if a variable~$y$ appeared before~$x$ in the permutation and was assigned a value~$c'$ such that the constraint~$\left(x\neq c \;\vee\; y\neq c'\right)$ appears in the list of constraints. 
When we reach the variable~$x$, we uniformly guess a value for it out of all values that are not ruled out in that manner.

\begin{lemma}\label{criticalvar}
Let $(V,F)$ be the sets of variables and constraints in a Unique~$(k,2)$-CSP.
Denote by~$\varphi$ the unique satisfying assignment of~$(V,F)$.
For every variable~$x\in V$ and every value~$\varphi(x)\neq c'\in [k]$ other than the value~$x$ is assigned in~$\varphi$, there exists a variable~$y=y_{x,c'}\in V\setminus\{x\}$ such that~$\left(x\neq c' \;\vee\; y\neq \varphi(y) \right)\in F$.
\end{lemma}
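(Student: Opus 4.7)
The plan is to leverage the uniqueness of $\varphi$ by considering the ``neighbor'' assignment $\varphi'$ that agrees with $\varphi$ everywhere except at $x$, where we set $\varphi'(x) = c'$. Since $\varphi' \neq \varphi$ but $\varphi$ is the unique satisfying assignment of $(V,F)$, the assignment $\varphi'$ must violate at least one constraint $C \in F$. The desired variable $y$ will then be extracted from this violated constraint.

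First I would observe that $C$ must depend on the variable $x$: indeed, $\varphi$ and $\varphi'$ differ only at $x$, so if $C$ did not involve $x$, it would evaluate identically under both assignments, contradicting the fact that $\varphi$ satisfies $C$ while $\varphi'$ does not. Writing $C$ in the canonical form $\left(x \neq c_1 \;\vee\; z \neq c_2\right)$ with $z$ possibly equal to $x$, the hypothesis that $\varphi'$ violates $C$ forces $c_1 = \varphi'(x) = c'$ and, in the case $z \neq x$, also $c_2 = \varphi'(z) = \varphi(z)$. Taking $y := z$ then produces exactly the constraint $\left(x \neq c' \;\vee\; y \neq \varphi(y)\right)$ claimed by the lemma.

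The main obstacle is the degenerate case $z = x$, where $C$ collapses to a unary constraint of the form $\left(x \neq c'\right)$ and fails to provide a second variable. To handle this, I would invoke the standard preprocessing step: any unary constraint on a single variable is absorbed by simply removing the forbidden value from that variable's set of allowed values. After preprocessing, no unary constraints remain, so the case $z = x$ cannot occur and the argument above completes. This convention is already implicit in the PPZ/PPSZ setup and is consistent with how the Beigel--Eppstein and Hertli et al.\ analyses track the per-variable allowed-value counts $k(x)$, so it comes for free in our context.
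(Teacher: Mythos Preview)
Your proof is correct and follows essentially the same idea as the paper's: both consider the modified assignment $\varphi'$ that flips $x$ to $c'$ and use uniqueness to locate a violated constraint (the paper phrases this by contradiction, you phrase it directly). Your treatment is in fact slightly more careful, since you explicitly handle the degenerate unary case $z=x$, which the paper's proof leaves implicit.
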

\begin{proof}
Assume by contradiction that there exists a variable~$x$ and a value~$c'\neq\varphi(x)$ for which~$\left(x\neq c' \;\vee\; y\neq \varphi(y) \right)\notin F$ for every variable~$y$.
Consider the assignment~$\varphi'$ such that~$\varphi'(x)=c'$ and~$\varphi'(y)=\varphi(y)$ for every~$y\neq x$. It is a satisfying assignment as well, and~$\varphi'\neq\varphi$ which contradicts the uniqueness assumption.
\end{proof}

Instead of uniformly drawing a permutation~$\pi\sim S_{|V|}$ of the variables, we (equivalently) independently draw a \emph{time} value~$\pi(x)\sim U\left(\left[0,1\right]\right)$ uniformly for every variable~$x$, and let~$\pi$ be the permutation induced by the order of the time values~$\pi(x)$ for~$x\in V$.

We observe that if~$\pi(y_{x,c'})<\pi(x)$ then~$c'\notin \mathcal{A} \left(x, \pi \right)$.
In particular, if~$\pi(x)=p\in [0,1]$ then for every~$c'\neq \varphi(x)$ with probability at least~$p$ we have that~$c'\notin \mathcal{A} \left(x, \pi \right)$.

\begin{lemma}\label{ppzforceprob}
For every variable~$x$, we have~
\[
\E_\pi\left[\log_k |\mathcal{A} \left(x, \pi \right)| \;\;\middle|\;\; \pi(x)=p\right]
\leq
\sum_{i=0}^{k-1} \binom{k-1}{i} \left(1-p\right)^{i} p^{k-1-i} \log_k \left(1+i\right)
.
\]
\end{lemma}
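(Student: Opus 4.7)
My plan is to separate the argument into a deterministic upper bound on $|\mathcal{A}(x,\pi)|$ in terms of which critical variables come before $x$, and then a purely probabilistic calculation that bounds the expected $\log_k$ of that quantity using the concavity of the logarithm. The claimed binomial sum is precisely what arises if all critical variables are distinct, so the core of the argument will be showing that this ``all distinct'' configuration is the worst case.

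For each $c' \neq \varphi(x)$ I would invoke Lemma~\ref{criticalvar} to obtain a critical variable $y_{x,c'}\neq x$ with $\left(x\neq c'\;\vee\; y_{x,c'}\neq \varphi(y_{x,c'})\right) \in F$. Whenever $\pi(y_{x,c'})<p$, this variable lies in $V_{\pi,x}$ and $\alpha_{\pi,x}$ assigns it the correct value $\varphi(y_{x,c'})$, so that single constraint together with $\alpha_{\pi,x}$ already implies $x\neq c'$; since $D\geq 1$ this shows $c'\notin \mathcal{A}(x,\pi)$. I therefore have the deterministic bound
\[
|\mathcal{A}(x,\pi)| \;\leq\; 1 + \bigl|\{c' \neq \varphi(x) : \pi(y_{x,c'})\geq p\}\bigr|.
\]
Let $y_1,\dots,y_N$ be the distinct variables appearing among the $y_{x,c'}$, with multiplicities $n_1,\dots,n_N$ summing to $k-1$. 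Conditional on $\pi(x)=p$, the other times $\pi(y_j)$ are i.i.d.\ uniform on $[0,1]$, so the indicators $B_j:=\mathbf{1}[\pi(y_j)\geq p]$ are i.i.d.\ $\mathrm{Bernoulli}(1-p)$ and the bound rewrites as $|\mathcal{A}(x,\pi)|\leq 1+\sum_{j=1}^N n_j B_j$.

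The main obstacle is to show that the expected $\log_k(1+\cdot)$ of this weighted Bernoulli sum is dominated by the fully unweighted case $(n_j\equiv 1)$, which produces the binomial sum in the statement. I plan to do this by a splitting argument: if $n_j=a+b$ with $a,b\geq 1$, replace the term $n_j B_j$ by two independent terms $aB'+bB''$ with $B',B''\sim \mathrm{Bernoulli}(1-p)$ independent. Using the concavity inequality
\[
\log_k(1+C+a)+\log_k(1+C+b)\;\geq\;\log_k(1+C)+\log_k(1+C+a+b)\qquad (a,b,C\geq 0),
\]
averaged against the four joint outcomes of $B',B''$ (conditioned on the remaining terms $C$), a direct computation gives that the split version has weakly larger expected $\log_k(1+\text{total})$. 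Iterating the split drives every multiplicity down to $1$, leaving $k-1$ i.i.d.\ $\mathrm{Bernoulli}(1-p)$ summands; expanding $\E[\log_k(1+\mathrm{Binomial}(k-1,1-p))]$ via the binomial theorem yields exactly the right-hand side of the lemma.
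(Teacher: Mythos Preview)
Your proof is correct and follows essentially the same approach as the paper: both upper-bound $|\mathcal{A}(x,\pi)|$ by $1+\sum_{c'}A_{c'}$ with $A_{c'}=\mathbf{1}[\pi(y_{x,c'})>p]$, and then argue via concavity of $\log_k(1+\cdot)$ that decoupling the possibly-coinciding indicators into $k-1$ independent $\mathrm{Bernoulli}(1-p)$ variables can only increase the expectation. The paper defers the decoupling step to Lemma~A.1 of \cite{hertli2016ppsz}, whereas your splitting argument (using $(1+C+a)(1+C+b)\geq(1+C)(1+C+a+b)$) is a clean self-contained proof of exactly that inequality.
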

\begin{proof}
For the analysis, we may assume that we rule out values only by the constraints involving~$x$ and one of the variables~$y_{x,c'}$ for~$c'\neq \varphi(x)$. This holds since ruling out more variables can only decrease the size of~$\mathcal{A} \left(x, \pi \right)$.

If the variables~$y_{x,c'}$ are distinct for all~$c'\neq \varphi(x)$, then the right hand side of the lemma's statement is exactly the expected size of~$\mathcal{A} \left(x, \pi \right)$, conditioned on~$\pi(x)=p$. 
This is simply the expectation of~$\log_k \left(1+i\right)$ where~$i\sim Binomial\left(k-1,1-p\right)$ is a binomial random variable. 

Generally, let~$A_{c'}$ be the indicator for the event that~$\pi(y_{x,c'})>p$. 
We need to upper bound $\E[\log_k \left(1 + \sum_{c'\neq \varphi(x)} A_{c'}\right)]$.
Let~$A'_{c'}$ be independent Bernoulli random variables with probability~$(1-p)$ to be~$1$ and probability~$p$ to be~$0$.
By concavity of the function~$\log_k(1+z)$ and Jensen's inequality it follows that 
$\E[\log_k \left(1 + \sum_{c'\neq \varphi(x)} A_{c'}\right)] \leq
\E[\log_k \left(1 + \sum_{c'\neq \varphi(x)} A'_{c'}\right)]$, which concludes our proof.
The complete proof of the last statement appears in~\cite{hertli2016ppsz} as Lemma~$A.1$.
\end{proof}

Denote by~$S'_{k,2} := \int_{0}^{1} \sum_{i=0}^{k-1} \binom{k-1}{i} \left(1-p\right)^{i} p^{k-1-i} \log_k \left(1+i\right) dp$.
By Lemma~\ref{ppzforceprob}, $\E \left[\log_k |\mathcal{A} \left(x, \pi \right)|\right] = \int_0^1 \E_\pi\left[\log_k |\mathcal{A} \left(x, \pi \right)| \;\;\middle|\;\; \pi(x)=p\right] dp \leq S'_{k,2}$, this concludes the analysis of the Feder-Motwani PPZ-type algorithm.

\begin{theorem}[\cite{feder2002worst}]\label{ppzcorrect}
The success probability of a PPZ iteration is at least~$k^{-S'_{k,2}}$.
\end{theorem}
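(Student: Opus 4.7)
The plan is to combine the structural observations already made in this subsection into a global bound, with Jensen's inequality as the glue. Since the satisfying assignment $\alpha$ is unique and $D$-implication from any partial assignment consistent with $\alpha$ can never eliminate $\alpha(x)$, the correct value always lies in $\mathcal{A}(x,\pi)$; combined with the fact that Algorithm~\ref{alg:ppsz} draws each $\alpha_0(x)$ independently uniformly from $\mathcal{A}(x,\pi)$, the conditional success probability given $\pi$ equals exactly $\prod_{x \in V} \frac{1}{|\mathcal{A}(x,\pi)|}$. Taking expectation over the random permutation, the overall success probability is $\E_\pi\!\left[\prod_{x \in V} \frac{1}{|\mathcal{A}(x,\pi)|}\right]$.

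Next I would write this product as $k^{-\sum_x \log_k |\mathcal{A}(x,\pi)|}$ and apply Jensen's inequality to the convex function $z \mapsto k^{-z}$, obtaining the lower bound
\[
k^{-\sum_{x\in V} \E_\pi\!\left[\log_k |\mathcal{A}(x,\pi)|\right]}.
\]
It thus suffices to bound $\E_\pi[\log_k |\mathcal{A}(x,\pi)|]$ by $S'_{k,2}$ for every individual variable $x$. For this I would condition on $\pi(x)=p$ and apply Lemma~\ref{ppzforceprob}, which gives the required binomial-style pointwise bound; integrating the inequality over $p \sim U([0,1])$ produces exactly the quantity defined as $S'_{k,2}$. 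Summing over the $n$ variables and plugging back into the Jensen step delivers the advertised lower bound $k^{-n S'_{k,2}}$ on the success probability.

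The only genuinely substantive work is already encapsulated in Lemma~\ref{ppzforceprob}, together with Lemma~\ref{criticalvar} which guarantees the existence of the critical neighbour $y_{x,c'}$ that makes each wrong value $c'$ killable once $y_{x,c'}$ appears earlier in $\pi$. So there is no real obstacle left; the remaining subtlety is simply checking that the Bernoulli indicators $A_{c'} = \mathbf{1}[\pi(y_{x,c'}) > p]$ are identically distributed (though possibly correlated and not necessarily distinct), which is precisely what Jensen/concavity of $\log_k(1+z)$ — invoked inside Lemma~\ref{ppzforceprob} — absorbs. Everything downstream of that lemma is bookkeeping: a product-to-sum conversion, one application of Jensen, one integration over the uniform time $p$, and summation over the $n$ variables.
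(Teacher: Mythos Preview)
Your proposal is correct and follows essentially the same route as the paper: express the success probability for a fixed $\pi$ as $\prod_{x}\frac{1}{|\mathcal{A}(x,\pi)|}$, apply Jensen to the convex map $z\mapsto k^{-z}$, and then bound each $\E_\pi[\log_k|\mathcal{A}(x,\pi)|]$ by integrating the conditional bound of Lemma~\ref{ppzforceprob} over $p\in[0,1]$. The only cosmetic difference is that you spell out a few details (e.g.\ why $\alpha(x)\in\mathcal{A}(x,\pi)$) that the paper leaves implicit, and you correctly write the final bound as $k^{-n\,S'_{k,2}}$ (the exponent $n$ is missing in the theorem statement as written).
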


\subsection{The PPSZ-type algorithm of Hertli et al.}\label{subsec:PPSZ}

In the PPSZ algorithm analysed in \cite{hertli2016ppsz} more involved~$D$-implications are considered.
In the analysis for~$D=1$, we noticed that for every variable~$x$ and every value~$c'\neq \varphi(x)$ there exists some variable~$y_{x,c'}$ such that if~$\pi(y_{x,c'})<\pi(x)$ then~$c'\notin \mathcal{A} \left(x, \pi \right)$. 

We say that a variable~$y$ is \emph{decided} with respect to some partial assignment~$\alpha_0$ if~$|\mathcal{A} \left(y, \alpha_0 \right)|=1$, i.e., if~$\alpha_0$ already $D$-implies the correct value of~$y$ in~$\varphi$.
The main observation is that if in time~$p:=\pi(x)$ the variable~$y_{x,c'}$ is decided then~$c'\notin \mathcal{A} \left(x, \pi \right)$.
The variable~$y_{x,c'}$ is necessarily decided if~$\pi(y_{x,c'})<p$ but can also be decided if it is yet to appear in the permutation. 
Thus, the probability of~$y_{x,c'}$ being decided at time~$p$ is strictly larger than~$p$.

We give an intuitive reasoning for the probability of a variable being decided.
Denote by~$q_k(p)$ the probability that a variable~$x$ is decided by time~$p$. 
The variable~$x$ is decided by time~$p$ if~$\pi(x)<p$ or alternatively if for every~$c'\neq \varphi(x)$ the variable~$y_{x,c'}$ is by itself decided at time~$p$. 
In particular, $q_k(p)$ is a solution to the recurrence~$q_k(p) = p + (1-p)q_k(p)^{k-1}$.
We thus denote by~$q_k(p)$ the smallest non-negative real solution to that recurrence, it can be analytically computed for every~$k$ as it is simply a root of a polynomial.

This intuitive argument is of course not complete and lacks many technical details. 
Nevertheless, this statement does hold, and the following strengthening of Lemma~\ref{ppzforceprob} and Theorem~\ref{ppzcorrect} are proven in~\cite{hertli2016ppsz}.

\begin{lemma}\label{ppszforceprob}[$A.1$ in \cite{hertli2016ppsz}]
For every variable~$x$, we have~
\[
\E_\pi\left[\log_k |\mathcal{A} \left(x, \pi \right)| \;\;\middle|\;\; \pi(x)=p\right]
\leq
\sum_{i=0}^{k-1} \binom{k-1}{i} \left(1-q_k\left(p\right)\right)^{i} q_k\left(p\right)^{k-1-i} \log_k \left(1+i\right)
.
\]
\end{lemma}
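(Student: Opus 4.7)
The plan is to mimic the proof of Lemma~\ref{ppzforceprob} but replace the event ``$y_{x,c'}$ arrives before $x$'' with the more inclusive event ``$y_{x,c'}$ is already decided at time $p=\pi(x)$.'' Let us say that a variable $y$ is \emph{decided at time $p$} if $\alpha_{\pi,y'} \models_D (y = \varphi(y))$ where $\alpha_{\pi,y'}$ is the partial assignment $\alpha$ restricted to the variables with $\pi$-time less than $p$. The first step is to observe that if $y_{x,c'}$ is decided at time $p$, then the critical constraint $(x\neq c' \vee y_{x,c'} \neq \varphi(y_{x,c'}))$ from Lemma~\ref{criticalvar}, together with the $D$ constraints witnessing the decision of $y_{x,c'}$, form a witness of size at most $D+1$ that $c' \notin \mathcal{A}(x,\pi)$. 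By rescaling $D$ (or equivalently working with $D$-implications of one higher order), we may therefore lower-bound $|\mathcal{A}(x,\pi)|$ by $1 + \sum_{c'\neq \varphi(x)} B_{c'}$, where $B_{c'}$ is the indicator that $y_{x,c'}$ is \emph{not} decided at time $p$.

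The heart of the proof is to show $\Pr[B_{c'}=0] \geq q_k(p)$ for each $c'$. To this end I would build a \emph{critical clause tree} $T$ rooted at $y_{x,c'}$: every internal node $y$ has $k-1$ children, one per value $c''\neq \varphi(y)$, namely the critical variable $y_{y,c''}$ guaranteed by Lemma~\ref{criticalvar}. Truncate the tree to depth roughly $D$ so that any ``successful'' subtree can be converted into an explicit $D$-implication. Call a leaf \emph{satisfied} if its $\pi$-time is less than $p$, and an internal node satisfied if all of its children are. One then argues recursively that if the root of $T$ is satisfied, then $y_{x,c'}$ is decided at time $p$, since the constraints corresponding to the edges of a satisfied subtree combine into a short witness of $D$-implication. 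Hence $\Pr[B_{c'}=0]$ is at least the probability that the root of $T$ is satisfied.

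The main obstacle is that variables may repeat inside $T$, so the ``satisfied'' events at sibling subtrees are not independent, and a naive recursion for the root probability does not go through. I would handle this with the standard PPSZ device: couple the real tree with an idealized branching process in which every occurrence of a variable is given a fresh, independent $\pi$-time, chosen so that the coupled process is stochastically dominated by the real one (repeating a variable can only make it easier to decide that variable's subtree). In the idealized process each node is satisfied with probability $r$, where $r$ solves $r = p + (1-p) r^{k-1}$, and a limiting argument as the truncation depth grows shows that $r$ can be taken equal to $q_k(p)$, the smallest non-negative root of the fixed-point equation.

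Finally, having established $\Pr[B_{c'}=0] \geq q_k(p)$, the last step is exactly the concavity argument used at the end of the proof of Lemma~\ref{ppzforceprob}. Replacing the possibly dependent $B_{c'}$ by independent Bernoulli variables $B'_{c'}$ with mean $1 - q_k(p)$ only increases $\mathbb{E}\bigl[\log_k(1 + \sum_{c'} B_{c'})\bigr]$, by Jensen applied to the concave function $z \mapsto \log_k(1+z)$ (the same argument as Lemma~$A.1$ of \cite{hertli2016ppsz}). The resulting expression is precisely the binomial sum in the statement, which completes the proof.
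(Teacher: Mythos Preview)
The paper does not actually give its own proof of this lemma: it is stated with the citation ``$A.1$ in \cite{hertli2016ppsz}'' and the surrounding text explicitly says that the intuitive argument is incomplete and that the full statement is ``proven in~\cite{hertli2016ppsz}.'' Your proposal is therefore not competing with a proof in the paper but rather reconstructing the proof from the cited reference, and it does so faithfully: the critical-clause-tree construction, the stochastic-domination/coupling argument to handle repeated variables, the fixed-point recurrence $r = p + (1-p)r^{k-1}$ with smallest root $q_k(p)$, and the final Jensen step are exactly the ingredients of the PPSZ analysis in \cite{hertli2016ppsz} (and match the informal sketch the paper gives just before the lemma).

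One small slip: you write that you ``lower-bound $|\mathcal{A}(x,\pi)|$ by $1+\sum_{c'} B_{c'}$,'' but this is an \emph{upper} bound (if $y_{x,c'}$ is decided then $c'$ is excluded, so $|\mathcal{A}(x,\pi)| \le 1 + \sum_{c'} B_{c'}$), which is indeed the direction needed for the inequality in the statement. With that correction the argument is sound.
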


Denote by $S_{k,2} := \int_0^1 \sum_{i=0}^{k-1} \binom{k-1}{i} \left(1-q_k\left(p\right)\right)^{i} q_k\left(p\right)^{k-1-i} \log_k \left(1+i\right) dp$.

\begin{theorem}[Correctness of \cite{hertli2016ppsz}]
Let $F$ be a Unique $(k,2)$-CSP formula, then for every variable $x$ it holds that
$\E_\pi\left[\log_k |\mathcal{A} \left(x, \pi \right)|\right]
\leq S_{k,2} + \varepsilon_D$,
where $\varepsilon_D$ is some error parameter that depends only on $D$ and goes to $0$ as $D$ goes to infinity.
\end{theorem}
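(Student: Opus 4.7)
The plan is to reduce the theorem to Lemma~\ref{ppszforceprob} by integrating the conditional bound over the random time $p := \pi(x)$. Since the permutation $\pi$ can equivalently be generated by drawing independent $\pi(y) \sim U([0,1])$ for every $y \in V$, the marginal distribution of $\pi(x)$ is uniform on $[0,1]$ and by the tower property
\[
\E_\pi\bigl[\log_k |\mathcal{A}(x,\pi)|\bigr]
= \int_0^1 \E_\pi\bigl[\log_k |\mathcal{A}(x,\pi)| \;\bigm|\; \pi(x) = p\bigr]\, dp.
\]
Substituting the bound of Lemma~\ref{ppszforceprob} into the integrand and swapping sum with integral produces exactly $S_{k,2}$ by its definition, so it only remains to account for the additive $\varepsilon_D$ term.

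The role of $\varepsilon_D$ is to absorb the gap between the idealized analysis captured by $q_k(p)$ and what the algorithm actually certifies. The value $q_k(p)$ arises from the recursion $q_k(p) = p + (1-p)\, q_k(p)^{k-1}$ obtained by unrolling ``decidedness'' to infinite depth, whereas the algorithm only verifies $D$-implications and so effectively works with a $D$-truncated variant $q_k^{(D)}(p)$ satisfying $q_k^{(D)}(p) \to q_k(p)$ as $D \to \infty$ uniformly in $p \in [0,1]$. Since the integrand from Lemma~\ref{ppszforceprob} is a continuous, bounded function of $q_k(p)$, this uniform convergence yields an additive error of at most some $\varepsilon_D$ after integration, with $\varepsilon_D \to 0$.

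The main obstacle is not in this bookkeeping but in Lemma~\ref{ppszforceprob} itself, which we inherit from~\cite{hertli2016ppsz}. The subtlety is that the events ``$y_{x,c'}$ is decided at time $p$'' for different values of $c'$ are neither independent nor distributed as $\mathrm{Bernoulli}(q_k(p))$: the same variable may serve as $y_{x,c'}$ for several $c'$, and being decided propagates recursively along the implication tree. The core of the Hertli et al.\ argument is to show via a concavity/coupling argument (strengthening Lemma~\ref{ppzforceprob}) that this dependence only helps, so that replacing the true indicator variables by independent Bernoullis with parameter $q_k(p)$ is a valid upper bound inside $\log_k(1 + \cdot)$. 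Granted that lemma, the theorem itself is the routine integration above.
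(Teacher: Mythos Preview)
Your proposal is correct and matches the paper's treatment: the paper does not give a self-contained proof of this theorem but states it as a result from~\cite{hertli2016ppsz}, deriving it from Lemma~\ref{ppszforceprob} by exactly the integration-over-$p$ argument you describe (paralleling the passage from Lemma~\ref{ppzforceprob} to Theorem~\ref{ppzcorrect} in the PPZ case). Your discussion of the $\varepsilon_D$ term as arising from the gap between finite-$D$ implication and the limiting recursion for $q_k(p)$ is also the intended interpretation.
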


\section{Faster Unique $(k,2)$-CSP algorithm}\label{sec:ours}
On a very high-level, our algorithm combines Hertli et al.'s PPSZ (Section~\ref{sec:PPZPPSZ}) with the BE algorithm (Section~\ref{sec:BE}).
We begin by illustrating our idea intuitively (initially ignoring some crucial technical details to be discussed later).
Consider a run of the PPSZ algorithm, as described in Section~\ref{sec:PPZPPSZ}.
For the early variables in the permutation $\pi$, it is very likely that $|\mathcal{A} \left(x, \pi \right)| = k$, since $\alpha_0$ assigns values to very few variables.
On the other hand, for the last variables in the permutation, it is very likely that $|\mathcal{A} \left(x, \pi \right)| = 1$.
It turns out that in any point throughout the run of a PPSZ iteration, the sizes $|\mathcal{A} \left(x, \alpha_0 \right)|$ for the remaining variables $x\in V\setminus V(\alpha_0)$ are quite concentrated. 
Furthermore, after most of the variables have~$|\mathcal{A} \left(x, \alpha_0 \right)| \approx k' < k$ the remaining portion of the PPSZ iteration strongly resembles a PPSZ algorithm for~$(k',2)$-CSP formulas.
As we see in Table~\ref{table}, for~$k<5$ PPSZ behaves worse on $(k,2)$-CSP formulas than the BE algorithm.

Thus, in our algorithm, we begin with an iteration of PPSZ but halt it somewhere in the middle of the permutation when the sizes $|\mathcal{A} \left(x, \alpha_0 \right)|$ are concentrated in~$1,2,3,4$. 
At that point, we use the extended BE algorithm shown in Section~\ref{sec:BE}.

We set some parameter~$t\in[0,1]$ to be chosen later and consider the following algorithm.
\begin{algorithm}[H]
\caption{Our algorithm}\label{alg}
Pick a uniform random permutation $\pi$ of the set $V$ of variables\;
Denote by $\pi_{<t}$ the prefix of~$\pi$ of size $t|V|$ and by~$V_{<t}$ the variables appearing in it\;
Set $\varphi=\emptyset$\;
\For{$x\in V_{<t}$ in the order dictated by $\pi_{<t}$}
{
Draw $c \sim U\left(\mathcal{A}\left(x, \varphi \right)\right)$ uniformly\;
Set $\varphi(x) := c$\;
}
Run the extended BE algorithm on the remaining CSP $F$\;
Return the solution $\varphi$\;
\end{algorithm}

Consider an iteration of Algorithm~\ref{alg}.
Denote by~$R_i$ the number of variables that appeared in~$V_{<t}$ and had $|\mathcal{A}\left(x, \pi \right)| = i$.
Denote by~$\varphi'$ the partial assignment constructed by time~$t$.
Let~$B_i$ be the number of variables that did not appear in~$V_{<t}$ and had $|\mathcal{A}\left(x, \varphi' \right)| = i$.

\begin{lemma}
The success probability of Algorithm~\ref{alg} is $\prod_{i=1}^{k} i^{-R_i} \cdot \prod_{i=5}^{k} \left(\frac{4}{i}\right)^{B_i}$.
\end{lemma}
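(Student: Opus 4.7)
The plan is to condition on the permutation $\pi$ and decompose success into two events, the PPSZ prefix producing the correct values on $V_{<t}$ and the extended BE phase then solving the residual CSP, and multiply their conditional probabilities. Let $\alpha$ denote the unique satisfying assignment of the input formula.

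For the PPSZ prefix, I will argue inductively on the position in $\pi$: if every variable processed before $x$ has received its $\alpha$-value, then the current partial assignment $\varphi$ is exactly $\alpha_{\pi,x}$, and so $\mathcal{A}(x,\varphi)=\mathcal{A}(x,\alpha_{\pi,x})=\mathcal{A}(x,\pi)$ by definition. The uniform draw from this set therefore hits $\alpha(x)$ with probability $1/|\mathcal{A}(x,\pi)|$. Chaining over $x\in V_{<t}$ and grouping by $i=|\mathcal{A}(x,\pi)|$ gives a prefix-success probability of $\prod_{i=1}^{k} i^{-R_i}$.

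For the BE phase, I will condition on the prefix having succeeded, so that $\varphi'=\restr{\alpha}{V_{<t}}$. For any unassigned $y$, every value $c\notin\mathcal{A}(y,\varphi')$ is $D$-implied false from $\varphi'$ and is therefore distinct from $\alpha(y)$; hence restricting $y$'s allowed set to $\mathcal{A}(y,\varphi')$ (which the extended BE algorithm does implicitly) preserves the unique solution. By definition, $B_i$ counts the unassigned variables whose restricted set has size $i$. Inside Theorem~\ref{eBE}, the only randomness is the down-sampling that picks a uniform $4$-subset of allowed values for each variable with at least $5$ of them; the value $\alpha(y)$ survives with probability $4/i$, independently across variables. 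Once every correct value survives, Lemma~\ref{be2} and Theorem~\ref{be34} deterministically recover $\alpha$ on the remainder, so the phase succeeds with probability $\prod_{i=5}^{k}(4/i)^{B_i}$.

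Multiplying the two conditional probabilities yields the stated formula. The one point to flag is that $B_i$ is \emph{a priori} random because $\varphi'$ depends on PPSZ's coin flips, but the only coin-flip outcomes that contribute to overall success force $\varphi'=\restr{\alpha}{V_{<t}}$, which makes $B_i$ a deterministic function of $\pi$; this is what legitimizes factoring the two probabilities and reading off the product above.
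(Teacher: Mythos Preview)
Your proof is correct and follows the same decomposition as the paper's own (very terse) argument: factor into the PPSZ-prefix success and the down-sampling survival in the extended BE phase, then multiply. Your version is simply a more careful elaboration, in particular the inductive justification that $\varphi=\alpha_{\pi,x}$ at each step and the observation that $B_i$ becomes a function of $\pi$ once one conditions on prefix success; the paper leaves both of these implicit.
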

\begin{proof}
The probability of all PPSZ assignments to be correct is~$\prod_{i=1}^{k} i^{-R_i}$, as follows from Section~\ref{sec:PPZPPSZ}.
The probability of the random down-sampling to not rule out the correct assignments is $\prod_{i=5}^{k} \left(\frac{4}{i}\right)^{B_i}$.
\end{proof}

\begin{lemma}
The running time of Algorithm~\ref{alg} is $O\left(1.3645^{B_3} \cdot 1.8072^{B_4 + \ldots + B_k}\right)$.
\end{lemma}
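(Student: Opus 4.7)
My plan is to observe that once the PPSZ prefix halts, the remaining instance is literally a CSP of the shape that the extended Beigel--Eppstein algorithm (Theorem~\ref{eBE}, specifically Theorem~\ref{be34}) is designed to solve, with variable counts $B_1, B_2, \ldots, B_k$. The only care that is required is to avoid double-counting the down-sampling overhead, which has already been folded into the success-probability bound of the previous lemma.

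First, I would formalize the instance passed to the extended BE call. After the prefix $\pi_{<t}$ is processed, let $\varphi'$ be the partial assignment produced, and let $F'$ be the CSP on $V \setminus V_{<t}$ obtained by restricting $F$ to these variables and adding, for each unassigned variable $x$, unary constraints forbidding the values in $[k] \setminus \mathcal{A}(x,\varphi')$ (the eliminated values are implied forbidden by $D$-implication from $\varphi'$, so $F'$ and the residual formula have the same satisfying assignments). By definition of $B_i$, the instance $F'$ has exactly $B_i$ variables with $i$ allowed values.

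Next I would run through the steps in the proof of Theorem~\ref{eBE}, but accounting carefully for where the costs go. Variables with $i=1$ contribute nothing and can be dropped; variables with $i=2$ are eliminated in polynomial time via Lemma~\ref{be2}. For each variable with $i \geq 5$ allowed values, we perform the down-sampling step to obtain a variable with exactly $4$ allowed values. Crucially, the probabilistic loss of this down-sampling, namely the factor $\prod_{i=5}^{k}(4/i)^{B_i}$, is already absorbed into the success-probability statement of the previous lemma, so it does \emph{not} appear in the per-iteration running time counted here; the down-sampling itself is a polynomial-time operation. After these reductions the residual instance has $B_3$ variables with $3$ allowed values and $B_4 + B_5 + \cdots + B_k$ variables with $4$ allowed values.

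Finally I invoke Theorem~\ref{be34} on this residual instance with $n_3 = B_3$ and $n_4 = B_4 + B_5 + \cdots + B_k$, obtaining a running time of $O\bigl(1.3645^{B_3} \cdot 1.8072^{B_4 + \cdots + B_k}\bigr)$, as claimed. I do not expect any real obstacle here; the only pitfall is the bookkeeping with down-sampling (ensuring we use the bound $1.8072$ per $i\geq 4$ variable rather than $BE(i) = 0.4518\,i$, since the $i/4$ factor lives in the success probability and not in the runtime).
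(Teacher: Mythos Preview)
Your proposal is correct and follows the same approach as the paper's proof, which is extremely terse: it simply notes that the partial PPSZ iteration runs in polynomial time and that the BE algorithm then runs in $O(1.3645^{n_3}\cdot 1.8072^{n_4})$. Your version spells out the bookkeeping the paper leaves implicit (eliminating $i\le 2$ variables, down-sampling $i\ge 5$ variables, and why the $i/4$ factors land in the success probability rather than the runtime), and the one thing you could add for completeness is the paper's explicit remark that the PPSZ prefix itself is polynomial.
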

\begin{proof}
The running time of the (partial) PPSZ iteration is polynomial. 
The running time of the BE algorithm is $O(1.3645^{n_3} \cdot 1.8072^{n_4})$.
\end{proof}

Note that all~$R_i$ and~$B_i$ are fully determined by the choice of~$\pi$.
Thus, for a specific choice of~$\pi$, if we repeatedly run Algorithm~\ref{alg} with~$\pi$ we expect finding a solution after~$\left(\prod_{i=1}^{k} i^{-R_i} \cdot \prod_{i=5}^{k} \left(\frac{4}{i}\right)^{B_i}\right)^{-1}$ iterations.
In particular, after \[\left(\prod_{i=1}^{k} i^{-R_i} \cdot \prod_{i=5}^{k} \left(\frac{4}{i}\right)^{B_i}\right)^{-1} \cdot O\left(1.3645^{B_3} \cdot 1.8072^{B_4 + \ldots + B_k}\right)
=
\prod_{i=1}^{k} i^{R_i} \cdot \prod_{i} BE\left(i\right)^{B_i}
\] computational steps.
At this point, we would like to bound the expected running time when picking a random~$\pi$ with $\prod_{i=1}^{k} i^{\E[R_i]} \cdot \prod_{i} BE\left(i\right)^{\E[B_i]} $.
Unfortunately, as we consider the running time and not a success probability (as in the PPSZ algorithm), we need an inequality of the opposite direction to Jensen's inequality.
Fortunately, this inequality \emph{essentially still holds} in this case.

\begin{lemma}[Wrong direction Jensen's inequality is still kind-of right]
Let~$\mathcal{A}$ be an algorithm with expected running time~$2^X$ conditioned on the value of a random variable~$X$.
There exists an algorithm~$\mathcal{A'}$ that successfully executes~$\mathcal{A}$ with probability at least~$0.99$ and has an expected running time of~$O\left(2^{\E\left[X\right]}\cdot \E\left[X\right]\right)$.
\end{lemma}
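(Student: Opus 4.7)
The plan is to let $\mathcal{A}'$ be a timeout-and-restart wrapper around $\mathcal{A}$: each ``try'' draws fresh randomness for $\mathcal{A}$ (inducing an independent copy of $X$), runs $\mathcal{A}$ for at most a fixed budget $T$ of steps, and declares success iff $\mathcal{A}$ terminates within that budget; $\mathcal{A}'$ then performs a bounded number of tries and returns the first successful one. The analysis is a two-step Markov sandwich -- one bound on $X$ itself and one on the conditional runtime -- and it delivers the lemma without any concentration assumption on $X$.

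Concretely, set $\mu := \E[X]$, $\tau := \mu+1$, and $T := 2\cdot 2^{\tau} = 4\cdot 2^{\mu}$. For a single try, let $X'$ denote the fresh copy of $X$ produced. Markov applied to the nonnegative random variable $X'$ gives $\Pr[X' \leq \tau] \geq 1 - \mu/\tau = 1/(\mu+1)$. On the event $\{X' \leq \tau\}$ we have $\E[\mathrm{time}(\mathcal{A}) \mid X'] = 2^{X'} \leq 2^{\tau} = T/2$, so a second Markov step applied to the runtime conditionally on $X'$ yields $\Pr[\mathrm{time}(\mathcal{A}) \leq T \mid X'] \geq 1/2$. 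Combining, each try succeeds with probability at least $p := 1/(2(\mu+1))$. Run at most $N := \lceil 2(\mu+1)\ln 100 \rceil$ independent tries; the probability that all of them fail is at most $(1-p)^{N} \leq e^{-pN} \leq 1/100$, so $\mathcal{A}'$ succeeds with probability at least $0.99$. Its total running time is deterministically bounded by $N \cdot T = O(\mu \cdot 2^{\mu}) = O(2^{\E[X]} \cdot \E[X])$, matching the stated bound.

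The only thing one might worry about is whether the two Markov steps chain validly: (i) fresh restarts must really produce independent copies of $X$, which is free because $\mathcal{A}'$ simulates $\mathcal{A}$ with its own independent coin tosses; and (ii) the hypothesis $\E[\mathrm{time}(\mathcal{A})\mid X] = 2^{X}$ must remain usable after conditioning on the $X$-measurable event $\{X' \leq \tau\}$, which is also free because a pointwise-in-$X$ statement about a conditional mean is preserved by conditioning on any $X$-measurable event. No structural property of $X$ -- concentration, being a sum of weakly dependent indicators, or anything else of that flavour -- is invoked, so the argument goes through for every nonnegative random variable $X$ with finite mean, exactly as the lemma is stated.
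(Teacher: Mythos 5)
Your proof is correct and takes essentially the same route as the paper's: one Markov bound on the nonnegative variable $X$ at threshold $\E[X]+1$, one Markov bound on the runtime conditioned on $X$, and $O(\E[X]+1)$ independent restarts of $\mathcal{A}$ each cut off after $O(2^{\E[X]})$ steps. The only cosmetic difference is that you compute a per-try success probability of $\tfrac{1}{2(\E[X]+1)}$ and then amplify, whereas the paper boosts the two Markov events separately (to failure probability $1/200$ each) and combines them with a union bound; the resulting guarantees are the same.
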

\begin{proof}
We apply Markov's inequality twice. First, to observe that
$$
\Pr\left(X > \E[X]+1\right)\leq \frac{1}{1+\frac{1}{\E[X]}} = 1-\frac{1}{\E[X]+1}.
$$
Hence, if we run~$\mathcal{A}$ independently for~$6\left(\E\left[X\right]+1\right)$ times, then with probability at least~$1-e^{-6} > 1-\frac{1}{200}$ at least one of these runs has~$X\leq \E\left[X\right]+1$.
Second, conditioned on any value of~$X$, with probability at least~$1-\frac{1}{200}$ algorithm~$\mathcal{A}$ finishes in less than~$200\cdot 2^{X}$ computational steps.
Thus, by union bound, if we run algorithm~$\mathcal{A}$ for~$6\left(\E\left[X\right]+1\right)$ times, and terminate each run after~$400\cdot 2^{\E[X]}$ computational steps, then at least one run of~$\mathcal{A}$ finishes with probability at least~$0.99$.
\end{proof}

\begin{corollary}
We find a satisfying assignment with probability greater than~$0.99$ in time
\[
O^\star \left(\prod_{i=1}^{k} i^{\E[R_i]} \cdot \prod_{i} BE\left(i\right)^{\E[B_i]}\right)
.
\tag{$\star$}
\]
\end{corollary}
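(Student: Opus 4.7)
The plan is to apply the \emph{wrong-direction Jensen} lemma to the composite random process obtained by repeatedly running Algorithm~\ref{alg} with a fixed random permutation $\pi$.

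First, I fix a specific permutation $\pi$ and consider the auxiliary algorithm $\mathcal{A}_\pi$ which runs independent iterations of Algorithm~\ref{alg} (using $\pi$ for the PPSZ prefix and fresh down-sampling / BE randomness each time) until a satisfying assignment is found. Combining the two preceding lemmas, the expected running time of $\mathcal{A}_\pi$ is
\[
\Bigl(\prod_{i=1}^{k} i^{-R_i} \cdot \prod_{i=5}^{k} (4/i)^{B_i}\Bigr)^{-1} \cdot O\bigl(1.3645^{B_3} \cdot 1.8072^{B_4+\ldots+B_k}\bigr)
= \mathrm{poly}(n) \cdot \prod_{i=1}^{k} i^{R_i} \cdot \prod_{i} BE(i)^{B_i},
\]
where the cancellation uses $BE(i) = i/4 \cdot 1.8072 = 0.4518\, i$ for $i\geq 5$ and $BE(i) = 1.8072$ for $i=4$, $BE(3)=1.3645$, $BE(i)=1$ for $i\leq 2$. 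In particular, defining
\[
X \;:=\; \log_2\Bigl(\prod_{i=1}^{k} i^{R_i} \cdot \prod_{i} BE(i)^{B_i}\Bigr) \;=\; \sum_{i=1}^{k} R_i \log_2 i \;+\; \sum_i B_i \log_2 BE(i),
\]
the expected running time of $\mathcal{A}_\pi$ conditioned on $\pi$ is $\mathrm{poly}(n)\cdot 2^X$, and $X$ is a function of the random permutation $\pi$ alone.

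Second, I apply the wrong-direction Jensen lemma to $\mathcal{A}_\pi$ (absorbing the polynomial factor into the algorithm). This yields an algorithm $\mathcal{A}'$ that, over the joint randomness of $\pi$ and the internal randomness of $\mathcal{A}_\pi$, succeeds with probability at least $0.99$ and runs in expected time $O(2^{\E[X]}\cdot \E[X])$. Since $R_i,B_i \leq n$ and $\log_2 BE(i) = O(\log k)$, we have $\E[X] = O(n\log k)$, which is absorbed into the $O^\star$ notation.

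Third, by linearity of expectation,
\[
\E[X] \;=\; \sum_{i=1}^k \E[R_i]\log_2 i \;+\; \sum_i \E[B_i]\log_2 BE(i),
\]
so $2^{\E[X]} = \prod_{i=1}^k i^{\E[R_i]}\cdot \prod_i BE(i)^{\E[B_i]}$, which gives exactly the bound $(\star)$. The only subtle point is making sure the wrong-direction Jensen lemma is invoked on the \emph{composite} process $\mathcal{A}_\pi$ rather than on a single iteration of Algorithm~\ref{alg} — otherwise the success probability, not the running time, would be the quantity exponentiated and we would get the wrong direction of the inequality. Bundling all repetitions for a fixed $\pi$ into one Las-Vegas procedure before averaging over $\pi$ is the crux of the argument.
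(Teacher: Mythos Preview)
Your proof is correct and follows essentially the same approach as the paper: fix $\pi$, bundle the repeated iterations into a Las-Vegas procedure with expected running time $2^X$ where $X=\sum_i R_i\log_2 i+\sum_i B_i\log_2 BE(i)$, then apply the wrong-direction Jensen lemma and use linearity of expectation. Your explicit remark that the lemma must be applied to the composite process $\mathcal{A}_\pi$ (and not to a single iteration) and your handling of the $\E[X]=O(n\log k)$ factor via $O^\star$ make the argument slightly more explicit than the paper's, but the logic is identical.
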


We give a simpler analysis leading to slightly sub-optimal bounds. In Section~\ref{sec:improve} we sketch the possible improvements to the analysis presented here, and also present a clear limit to the improvements that can be achieved by this algorithm.

We slightly abuse notation by equating the numbers $R_i,B_i$ with the sets of variables they are counting.
Let~$x$ be a variable.
For the analysis we can assume that the algorithm rules out values for~$x$ only due to the constraints involving~$x$ and some~$y_{x,c'}$. This holds as ruling out more values can only improve the success probability of each iteration.

We first consider the case in which the variables~$y_{x,c'}$ for every~$c'\neq \varphi(x)$ are all distinct.

\begin{lemma}\label{partialppsz}
For each variable~$x$, we have 
\[
\E\left[\sum_{i=1}^{k} Pr\left(x\in R_i\right)\cdot \log_k i \right]
\leq
\int_0^t \sum_{i=0}^{k-1} \binom{k-1}{i} \left(1-q_k\left(p\right)\right)^{i} q_k\left(p\right)^{k-1-i} \log_k \left(1+i\right) dp
.\]
\end{lemma}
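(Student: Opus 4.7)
The plan is to reduce the statement directly to Lemma~\ref{ppszforceprob} by truncating the PPSZ integration at the stopping time~$t$. First I would pass to the standard time-value representation of $\pi$ (as in Section~\ref{subsec:PPZ}), so that each variable $y$ receives an independent uniform label $\pi(y)\in [0,1]$. Under this coupling, the event $x\in V_{<t}$ coincides exactly with the event $\pi(x)<t$, and the random variable $|\mathcal{A}(x,\pi)|$ depends only on the labels and on the (correct) partial assignment $\alpha_{\pi,x}$; it is in particular unaffected by whether Algorithm~\ref{alg} is going to halt at time $t$ or not. Consequently, $|\mathcal{A}(x,\pi)|$ in the truncated run of Algorithm~\ref{alg} is the same random variable as in the full PPSZ run analysed in Section~\ref{subsec:PPSZ}.

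Second, I would rewrite the left-hand side as a single expectation. Since $\Pr(x \in R_i) = \Pr(\pi(x)<t,\ |\mathcal{A}(x,\pi)|=i)$, one has
\[
\sum_{i=1}^{k} \Pr\left(x \in R_i\right)\log_k i \;=\; \E\bigl[\log_k |\mathcal{A}(x,\pi)| \cdot \mathbb{1}[\pi(x)<t]\bigr] \;=\; \int_{0}^{t} \E\bigl[\log_k |\mathcal{A}(x,\pi)|\,\big|\,\pi(x)=p\bigr]\,dp,
\]
using the law of total expectation together with the fact that $\pi(x)$ is uniform on $[0,1]$. The inner conditional expectation is now precisely the quantity upper-bounded by Lemma~\ref{ppszforceprob}, so substituting that bound into the integrand and leaving the outer integral restricted to $[0,t]$ delivers exactly the right-hand side of the claim.

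The only subtle point—already flagged in the paragraph preceding the lemma—is the caveat that Lemma~\ref{ppszforceprob} is stated under the simplifying assumption that the witnesses $y_{x,c'}$ for $c'\neq \varphi(x)$ are all distinct. The extension to the general case is inherited directly from \cite{hertli2016ppsz} via the same concavity/Jensen move as in the proof of Lemma~\ref{ppzforceprob}: one replaces the possibly-correlated indicators ``$y_{x,c'}$ is not yet decided at time $p$'' by independent Bernoullis of matching marginal probability $1-q_k(p)$, and then exploits the concavity of $z\mapsto \log_k(1+z)$. I do not view this inherited technicality as a real obstacle; the essence of the proof is simply that stopping the PPSZ phase at time $t$ truncates the PPSZ integral at $p=t$, and the per-variable conditional bound is unchanged up to that time.
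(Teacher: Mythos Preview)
Your proposal is correct and is exactly the approach the paper takes: the paper's entire proof is the one-liner ``This follows immediately from Lemma~\ref{ppszforceprob},'' and your argument simply unpacks that line by rewriting the left-hand side as $\int_0^t \E[\log_k|\mathcal{A}(x,\pi)|\mid \pi(x)=p]\,dp$ and applying the conditional bound. One small correction: Lemma~\ref{ppszforceprob} (Lemma~A.1 of \cite{hertli2016ppsz}) is already stated and proved for arbitrary~$x$ without any distinctness assumption on the witnesses~$y_{x,c'}$, so your final paragraph's caveat is unnecessary here---the distinctness hypothesis in the surrounding text pertains only to Lemma~\ref{distinctvars}.
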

\begin{proof}
This follows immediately from Lemma~\ref{ppszforceprob}.
\end{proof}

\begin{lemma}\label{distinctvars}
Let~$x$ be a variable for which~$y_{x,c'}$ are distinct for all~$c'\neq \varphi(x)$. Then,
\[
\E\left[\sum_{i=3}^k Pr\left(x\in B_i\right) \cdot \log_k EB\left(i\right)\right]
\leq
(1-t) \cdot \sum_{i=3}^k \binom{k-1}{i} \left(1-t\right)^{i} t^{k-1-i} \cdot \log_k EB\left(i\right)
.\]
\end{lemma}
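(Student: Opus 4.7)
Under the distinctness hypothesis, the plan is to compute the distribution of $|\mathcal{A}(x, \varphi')|$ for a variable $x$ that falls outside $V_{<t}$, by showing that each of the $k-1$ witness variables produced by Lemma~\ref{criticalvar} is independently in $V_{<t}$ with probability $t$, and then to invoke monotonicity of $BE(\cdot)$.

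First I split on the event $\pi(x) > t$, which has probability $1-t$; on the complement $x$ lies in some $R_i$ rather than in any $B_i$ and contributes zero. Working in the standard continuous-time coupling in which each $\pi(y)$ is an i.i.d.\ uniform on $[0,1]$, conditional on $\pi(x) > t$ the events $\{\pi(y_{x,c'}) < t\}$ are mutually independent Bernoullis with parameter $t$, because the witnesses $y_{x,c'}$ are all distinct (and distinct from $x$) by the distinctness hypothesis together with Lemma~\ref{criticalvar}.

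Next I invoke Lemma~\ref{criticalvar} to exhibit, for each $c' \neq \varphi(x)$, a constraint $(x \neq c' \vee y_{x,c'} \neq \varphi(y_{x,c'})) \in F$. Whenever $y_{x,c'}$ appears in $V_{<t}$ and was correctly assigned by the partial PPSZ run (which is the very event against which the $\prod i^{-R_i}$ factor of the success probability is charged), this constraint $1$-implies $c' \notin \mathcal{A}(x,\varphi')$. For the upper bound I pessimistically assume that $c'$ is ruled out \emph{only} in this situation: since $BE$ is monotone non-decreasing in its argument, ignoring additional rule-outs coming from deeper $D$-implications can only inflate $\log_k BE(|\mathcal{A}(x,\varphi')|)$ and therefore still gives a valid upper bound. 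The resulting dominating random variable for $|\mathcal{A}(x,\varphi')|$ is $1 + Z$ with $Z \sim \mathrm{Binomial}(k-1, 1-t)$, and writing out $\E[\log_k BE(1+Z)]$ explicitly and then multiplying by the $(1-t)$ factor from the conditioning yields the claimed right-hand side (terms with $1+Z \in \{1,2\}$ drop out because $BE(1)=BE(2)=1$, which is why the summation starts at $i=3$).

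The main subtlety I anticipate is keeping track of the direction of each estimate: because the target quantity is a running-time upper bound rather than a success-probability lower bound, the correct pessimistic move is to assume \emph{fewer} rule-outs than actually occur, which is opposite to the pessimism used in the PPSZ success-probability analysis of Lemma~\ref{ppszforceprob}. Once this is in place the computation is completely elementary, with no concentration or coupling arguments needed beyond the one-line independence observation described above.
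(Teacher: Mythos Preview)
Your proposal is correct and follows essentially the same approach as the paper: a PPZ-type analysis in which you condition on $\pi(x)>t$ (probability $1-t$), use distinctness of the $y_{x,c'}$ to model the number of unruled colors by $1+\mathrm{Binomial}(k-1,1-t)$, and read off the binomial expansion. Your write-up is in fact more careful than the paper's two-line proof, making explicit both the monotonicity of $BE$ (so that ignoring extra rule-outs is the correct direction of pessimism for a running-time bound) and the reason the sum starts at $i=3$.
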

\begin{proof}
For simplicity, we analyse this part with a PPZ-type analysis (rather than PPSZ-type one), this is further discussed in Section~\ref{sec:improve}.
The probability that~$x\notin V_{<t}$ is $(1-t)$, and the probability that exactly~$i$ out of the~$(k-1)$ variables~$y_{x,c'}$ do not appear in~$V_{<t}$ is $\binom{k-1}{i} \left(1-t\right)^{i} t^{k-1-i}$.
\end{proof}

Denote by 
\begin{align*}
\text{cost}(k,t) :=&
\int_0^t \sum_{i=0}^{k-1} \binom{k-1}{i} \left(1-q_k\left(p\right)\right)^{i} q_k\left(p\right)^{k-1-i} \log_k \left(1+i\right) dp
\\&+
(1-t) \cdot \sum_{i=3}^k \binom{k-1}{i} \left(1-t\right)^{i} t^{k-1-i} \cdot \log_k EB\left(i\right)
.
\end{align*}

If all variables had completely distinct~$y_{x,c'}$'s, then by Lemma~\ref{partialppsz} and Lemma~\ref{distinctvars} we would have that $(\star)$ and in particular the running time of our algorithm is bounded by~$O\left(k^{\text{cost}(k,t)n}\right)$ for any choice of~$t$.
This would give us~$O(2.22936^n)$ for~$k=5,t=0.23$ and~$O(2.64001^n)$ for~$k=6,t=0.35$.
We now deal with the case in which these are not distinct.

In the proof of Lemma~\ref{ppzforceprob} we faced the same problem and solved it by a simple application of Jensen's inequality to the concave function~$\log_k(1+i)$. 
Unfortunately, the function $\log_k EB\left(i\right)$ is not concave (for~$i=1,\ldots,k$) due to its values on~$i=1,2$.
Indeed, the left-hand side of the inequality in Lemma~\ref{distinctvars} is higher than the right-hand side if these variables are not distinct. 
On the other hand, when these variables are not distinct then the term of Lemma~\ref{partialppsz} is much smaller.

Consider the expression
\[
\E\left[
\sum_{i=1}^{k} Pr\left(x\in R_i\right)\cdot \log_k i
+
\sum_{i=3}^k Pr\left(x\in B_i\right) \cdot \log_k EB\left(i\right)
\right]
\tag{$\star \star$}
\]
again. 
This time, we will assume that the variables~$y_{x,c'}$ are not all distinct.
Denote by~$k':=|\{y_{x,c'}\;\;|\;\;c'\neq \varphi(x)\}|<k-1$ the number of such distinct variables, and by $j_1,\ldots,j_{k'}$ their cardinalities (note that $\sum_{i=1}^{k'} j_i = k-1$).
Consider the following expression.
\begin{align*}
\E \Big[ &
\int_0^t \sum_{b_1,\ldots,b_{k'} \in \{0,1\}} q_k\left(p\right)^{k' - \sum_{i=1}^{k'} b_i} \cdot \left(1-q_k\left(p\right)\right)^{\sum_{i=1}^{k'} b_i} \cdot \log_k \left(1+\sum_{i=1}^{k'}j_i b_i\right) dp
\tag{$\star \star \star$} \\& 
+ (1-t)\cdot 
\sum_{b_1,\ldots,b_{k'} \in \{0,1\}} t^{k' - \sum_{i=1}^{k'} b_i} \cdot (1-t)^{\sum_{i=1}^{k'} b_i} \cdot  \log_k EB\left(1+\sum_{i=1}^{k'}j_i b_i\right) 
\Big].
\end{align*}
Expression $(\star \star \star)$ is a generalized form of~$\text{cost}(k,t)$ and thus upper bounds $(\star \star)$ by the same arguments.
Completely analysing the behaviour of Expression~$(\star \star \star)$ for different partitions is rather technically involved and thus we simply enumerate over the few possible cases (for small values of~$k$). 
In Section~\ref{sec:improve} we further discuss the possible improvements to the analysis of this section.

\begin{theorem}\label{bound6}
We solve Unique~$(6,2)$-CSP in $O\left(2.641^n\right)$ time.
\end{theorem}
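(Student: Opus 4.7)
The plan is to invoke the Corollary following $(\star)$ with $k=6$ and $t = 0.35$. The running time is then $O^\star(6^{Cn})$, where the per-variable exponent $C$ equals the maximum of expression $(\star\star\star)$ over all partitions $(j_1,\ldots,j_{k'})$ of $k-1=5$ corresponding to the possible multiplicities of the forcing variables $y_{x,c'}$. The target is $C \leq \log_6(2.641) \approx 0.5421$.

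First I would handle the all-distinct partition $(1,1,1,1,1)$, in which $(\star\star\star)$ collapses exactly to $\text{cost}(6,0.35)$. Here one computes $q_6(p)$ numerically as the smallest non-negative root of $q = p + (1-p)q^5$, then evaluates the integral $\int_0^{0.35}\sum_{i=0}^{5}\binom{5}{i}(1-q_6(p))^{i} q_6(p)^{5-i}\log_6(1+i)\,dp$ together with the closed-form sum $0.65 \cdot \sum_{i=3}^{5}\binom{5}{i}(0.65)^{i}(0.35)^{5-i}\log_6 BE(i)$. As already noted in the text, this yields $\text{cost}(6, 0.35) \approx \log_6(2.64001)$, comfortably below the target.

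Next I would enumerate the remaining partitions of $5$, namely $(5)$, $(4,1)$, $(3,2)$, $(3,1,1)$, $(2,2,1)$, and $(2,1,1,1)$. For each one, expression $(\star\star\star)$ is a finite sum of at most $2^{k'}\leq 16$ terms (plus an integral over $p\in[0,t]$ of a similar sum), all computable in closed form up to a one-dimensional numerical integration against the same function $q_6(p)$. The main obstacle is the non-concavity of $\log_6 BE(i)$ at $i=1,2$: since $BE(1)=BE(2)=1$ the log vanishes there while $\log_6 BE(3)>0$, so coalescing forcing variables (shrinking $k'$) may \emph{increase} the BE contribution by shifting mass across the non-concave jump. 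On the other hand, the same coalescing strictly \emph{decreases} the PPSZ contribution, which is governed by the concave $\log_6(1+\cdot)$. The content of the proof is verifying numerically, across all seven partitions of $5$, that this trade-off remains favorable and $(\star\star\star) \leq \log_6(2.641)$ in every case; since $\log_6 BE$ is concave on $\{3,4,5,6\}$, it is the partitions that involve the atoms $i=1,2$ (such as $(5)$, $(4,1)$, $(2,1,1,1)$) that need the closest inspection.

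Combining the two cases, $C \leq \log_6(2.641)$. The Corollary following $(\star)$ then produces an algorithm that succeeds with probability at least $0.99$ in expected time $O^\star(6^{Cn}) = O(2.641^n)$, completing the proof.
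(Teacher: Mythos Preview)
Your proposal is correct and follows essentially the same route as the paper: fix $t=0.35$, observe that the all-distinct partition yields $\text{cost}(6,0.35)=\log_6(2.64001)$, and then enumerate the remaining integer partitions of $5$ to verify that $(\star\star\star)$ stays below this value in every case. The paper carries out exactly this enumeration and records the six numerical values (all between $\log_6(2.548)$ and $\log_6(2.621)$), whereas you leave the computation implicit and instead explain the concavity trade-off that makes it plausible; but the argument is the same.
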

\begin{proof}
For the choice $t=0.37$ we have that $\text{cost}(6,0.35)=\log_6(2.64001)$, this choice of~$t$ minimizes~$\text{cost}(6,t)$.
We verify that for~$t=0.35$ Expression~$(\star \star \star)$ is always lower than~$\text{cost}(6,0.35)$ and thus finish, as this implies that for every variable Expression~$(\star \star)$ is bounded by $\text{cost}(6,0.35)$.
\begin{itemize}
\item For the partition $(j_1,j_2,j_3,j_4)=(2,1,1,1)$ the value of $(\star \star \star)$ in $t=0.35$ is $\log_6(2.62023)$.
\item For the partition $(j_1,j_2,j_3)=(2,2,1)$ the value of $(\star \star \star)$ in $t=0.35$ is $\log_6(2.61171)$.
\item For the partition $(j_1,j_2,j_3)=(3,1,1)$ the value of $(\star \star \star)$ in $t=0.35$ is $\log_6(2.58391)$.
\item For the partition $(j_1,j_2)=(3,2)$ the value of $(\star \star \star)$ in $t=0.35$ is $\log_6(2.60366)$.
\item For the partition $(j_1,j_2)=(4,1)$ the value of $(\star \star \star)$ in $t=0.35$ is $\log_6(2.54819)$.
\item For the partition $(j_1)=(5)$ the value of $(\star \star \star)$ in $t=0.35$ is $\log_6(2.55566)$.
\end{itemize}
\end{proof}

\begin{theorem}\label{bound5}
We solve Unique~$(5,2)$-CSP in $O\left(2.232^n\right)$ time.
\end{theorem}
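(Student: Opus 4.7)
The plan is to mirror the proof of Theorem~\ref{bound6} but specialized to $k=5$. First, I would choose a threshold $t$ close to the minimizer of $\text{cost}(5,t)$ in the fully-distinct case: the earlier discussion records that $\text{cost}(5, 0.23) = \log_5(2.22936)$, so the candidate is around $t \approx 0.23$, possibly perturbed slightly (as was done for $k=6$, where a value other than the unconstrained optimum is used) so that every partition of $k-1=4$ yields a value of Expression~$(\star\star\star)$ that remains bounded by $\log_5(2.232)$.

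Next, I would enumerate all multisets of cardinalities $(j_1, \ldots, j_{k'})$ with $\sum_i j_i = k-1 = 4$, namely
\[
(1,1,1,1),\quad (2,1,1),\quad (2,2),\quad (3,1),\quad (4).
\]
The fully-distinct partition $(1,1,1,1)$ is handled by $\text{cost}(5,t)$ itself via Lemmas~\ref{partialppsz} and~\ref{distinctvars}. For each of the four remaining partitions I would evaluate Expression~$(\star\star\star)$ at the chosen $t$: numerically integrate the PPSZ contribution $\int_0^t (\cdots)\, dp$, where $q_5(p)$ is the smallest non-negative real root of $q = p + (1-p)q^4$, and then add the polynomial BE contribution evaluated at $t$. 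Each case parallels one bullet point in the proof of Theorem~\ref{bound6} and is a routine computation.

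The main obstacle is balancing $t$ across the five cases. Partitions concentrated on a single large $j_i$, such as $(4)$ or $(3,1)$, behave qualitatively differently from the distinct partition $(1,1,1,1)$: a single pre-existing blocking variable can simultaneously eliminate many of the values of $x$, which shifts weight from the PPSZ part of $(\star\star\star)$ to the BE part, and hence these partitions typically favor a different $t$ than the distinct case. The only non-routine step is therefore identifying a single $t$ that simultaneously keeps \emph{every} partition below the target $\log_5(2.232)$. Once such a $t$ is fixed, the corollary following the wrong-direction Jensen lemma converts Expression~$(\star)$ into the claimed $O(2.232^n)$ bound on the expected running time.
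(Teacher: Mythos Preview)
Your plan reproduces the $k=6$ argument, but the paper's $k=5$ proof is \emph{not} a straight replay of Theorem~\ref{bound6}; it needs one additional idea that your proposal is missing.

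Concretely, the paper takes $t=0.23$ (the minimizer of $\text{cost}(5,t)$, giving $\log_5(2.22936)$) and checks the partitions $(2,1,1)$, $(2,2)$, $(3,1)$, $(4)$. The first three come in below $\text{cost}(5,0.23)$, but the concentrated partition $(j_1)=(4)$ evaluates to $\log_5(2.24925)$, which \emph{exceeds} the target. So the single-$t$ strategy you describe fails at $t=0.23$, and the paper does not fix this by perturbing $t$; instead it introduces a balancing argument. Let $\alpha$ be the fraction of variables whose critical partition is $(4)$. Running Algorithm~\ref{alg} with $t=0.23$ gives a bound $O\!\left(\left(2.22936^{\,1-\alpha}\cdot 2.24925^{\,\alpha}\right)^n\right)$, while running plain PPSZ (i.e.\ $t=1$) gives $O\!\left(\left(2.25303^{\,1-\alpha}\cdot 2.01077^{\,\alpha}\right)^n\right)$, since partition $(4)$ is exceptionally good for PPSZ. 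Running both and taking the faster one, the two bounds cross at $\alpha\approx 0.08612$, yielding $O(2.23107^n)$ in the worst case.

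Your proposal's ``only non-routine step'' of finding a single $t$ that simultaneously controls all five partitions is thus exactly the step that does not go through in the paper's analysis; the missing ingredient is this two-algorithm balance over the structural parameter $\alpha$. (Incidentally, for $k=6$ the paper does use the unconstrained minimizer of $\text{cost}(6,t)$, not a perturbed value, so the analogy you draw there is also off.)
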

\begin{proof}
For the choice $t=0.23$ we have that $\text{cost}(5,0.23)=\log_5(2.22936)$, this choice of~$t$ minimizes~$\text{cost}(5,t)$.
This time, unfortunately, for~$t=0.23$ Expression~$(\star \star \star)$ is not always lower than~$\text{cost}(5,0.23)$. In particular, it is for every partition except of $(j_1)=4$.
\begin{itemize}
\item For the partition $(j_1,j_2,j_3)=(2,1,1)$ the value of $(\star \star \star)$ in $t=0.23$ is $\log_5(2.21658)$.
\item For the partition $(j_1,j_2)=(2,2)$ the value of $(\star \star \star)$ in $t=0.23$ is $\log_5(2.21983)$.
\item For the partition $(j_1,j_2)=(3,1)$ the value of $(\star \star \star)$ in $t=0.23$ is $\log_5(2.20499)$.
\item For the partition $(j_1)=(4)$ the value of $(\star \star \star)$ in $t=0.23$ is $\log_5(2.24925)$.
\end{itemize}

Denote by~$\alpha$ the fraction of variables for which the~$y_{x,c'}$ variables are all the same (i.e., variables with the only problematic partition).
With the choice~$t=0.23$ the running time of our algorithm on a formula is~$O\left(\left(2.22936^{\left(1-\alpha\right)}\cdot 2.24925^\alpha\right)^n\right)$.
On the other hand, if~$\alpha$ is large we can simply run the regular PPSZ algorithm and gain much. We see that by setting~$t=1$ and computing Expression~$(\star \star \star)$ for the same partition~$(j_1)=(4)$, gives a value of~$\log_5(2.01077)$.
Thus, running regular PPSZ would give us a running time of~$O\left(\left(2.25303^{\left(1-\alpha\right)}\cdot 2.01077^\alpha\right)^n\right)$.
We can therefore try both options ($t=0.23$ or $t=1$) simultaneously and thus get the running time of the faster one.
Both expressions balance at~$\alpha=0.08612$, giving us a running time of~$O\left(2.23107^n\right)$.
\end{proof}

Computation identical to this of Theorem~\ref{bound6} gives running time of $O\left(3.042^n\right)$ for~$k=7$ with~$t=0.44$.

\section{Improvements and Limitations}\label{sec:improve}
In this section we sketch a possible improvement to the analysis of Section~\ref{sec:ours}. The purpose of this section is not to tighten the upper bound but to explain the limitations of our algorithm and to convince that even with a tight analysis of Algorithm~\ref{alg} it will achieve running times that are only slightly better than these we get in Section~\ref{sec:ours}.
In particular, if getting~$O\left(\left(2-\varepsilon\right)^n\right)$ time for $(5,2)$-CSP is possible, new algorithmic tools are likely required.

Consider the expression $\prod_{i=1}^{k} i^{\E[R_i]} \cdot \prod_{i} EB(i)^{\E[B_i]}$ $(\star)$ proven in Section~\ref{sec:ours} to upper bound the running time of our algorithm.
In Lemma~\ref{partialppsz} we give a likely tight bound for the term involving~$\E[R_i]$, yet in Lemma~\ref{distinctvars} we settle for a PPZ-type bound for~$\E[B_i]$ in which we consider only the events in which the variables~$y_{x,c'}$ themselves appear before time~$t$ and not the events in which they are decided by that time.
The reason for this discrepancy becomes clear in the rest of the analysis. 
Due to the non-concave objective function in~$i$, we can no longer assume independence between the events of each~$y_{x,c'}$ being decided.
Nevertheless, later in Theorem~\ref{bound6} and Theorem~\ref{bound5} we observe that while the term involving~$\E[B_i]$ indeed gets worse with dependencies, the other term involving~$\E[B_i]$ gets significantly better with them and thus can cover for those dependencies. 
Ideally, then, the simple PPZ-type bound of~$t$ in Lemma~\ref{distinctvars} can be replaced with the PPSZ-type bound of~$q_k(t)$ in the total bound.
This would result in the following tighter cost function.

\begin{align*}
\tilde{\text{cost}}(k,t) :=&
\int_0^t \sum_{i=0}^{k-1} \binom{k-1}{i} \left(1-q_k\left(p\right)\right)^{i} q_k\left(p\right)^{k-1-i} \log_k \left(1+i\right) dp
\\&+
(1-t) \cdot \sum_{i=3}^k \binom{k-1}{i} \left(1-q_k\left(t\right)\right)^{i} q_k\left(t\right)^{k-1-i} \cdot \log_k EB\left(i\right)
.
\end{align*}

With this ideal cost function, we would get running times of~$O\left(2.223^n\right)$ for Unique $(5,2)$-CSP (for $t=0.32$) and~$O\left(2.628^n\right)$ for Unique $(6,2)$-CSP (for $t=0.46$).

\section{Conclusions and Open Problems}\label{sec:conclude}
In Section~\ref{sec:ours} we presented an algorithm for Unique $(k,2)$-CSP with a running time of~$O(2.232^n)$ for~$k=5$.
In Section~\ref{sec:improve} we argued that even with an ideal analysis, the bound we get for~$k=5$ is only the slightly better~$O(2.223^n)$.
Thus, it remains open and would likely require new algorithmic tools to show that $(5,2)$-CSP can be solved in~$O\left(\left(2-\varepsilon\right)^n\right)$ time, or alternatively to rule out the existence of such algorithm by reductions to popular conjectures.
More generally, we raise the following open problem.
\begin{open*}
What is the maximal~$k$ such that $(k,2)$-CSP can be solved in~$O\left(\left(2-\varepsilon\right)^n\right)$ time?
\end{open*}

The main algorithmic observation in this paper is in fact a general insight regarding the behaviour of PPSZ-type algorithms.
The Beigel-Eppstein algorithm \cite{beigel20053} only works for $(k,2)$-CSP. 
On the other hand, the PPSZ-type algorithm \cite{hertli2016ppsz} generalizes to~$b>2$ and is in fact currently the fastest algorithm for Unique $(a,b)$-CSP with~$b>2$ and any~$a$. 
Using the tools we introduced in this paper, it should be possible to turn any faster algorithm for $(a,b)$-CSP for a specific~$(a,b)$ into a faster $(a',b)$-CSP algorithm for all~$a'>a$.

Another follow-up question is whether our algorithm can be generalized to the non-unique $(k,2)$-CSP case.

%%
%% Bibliography
%%

%% Please use bibtex, 

\bibliography{csp_bib}

\newcommand{\etalchar}[1]{$^{#1}$}
\begin{thebibliography}{HHM{\etalchar{+}}16}

\bibitem[BE05]{beigel20053}
Richard Beigel and David Eppstein.
\newblock 3-coloring in time {$O(1.3289^n)$}.
\newblock {\em Journal of Algorithms}, 54(2):168--204, 2005.

\bibitem[BHK09]{bjorklund2009set}
Andreas Bj{\"o}rklund, Thore Husfeldt, and Mikko Koivisto.
\newblock Set partitioning via inclusion-exclusion.
\newblock {\em SIAM Journal on Computing}, 39(2):546--563, 2009.

\bibitem[CIP09]{calabro2009complexity}
Chris Calabro, Russell Impagliazzo, and Ramamohan Paturi.
\newblock The complexity of satisfiability of small depth circuits.
\newblock In {\em International Workshop on Parameterized and Exact
  Computation}, pages 75--85. Springer, 2009.

\bibitem[FK10]{fomin2010exact}
F.V. Fomin and D.~Kratsch.
\newblock {\em Exact Exponential Algorithms}.
\newblock Texts in Theoretical Computer Science. An EATCS Series. Springer
  Berlin Heidelberg, 2010.

\bibitem[FK13]{fomin2013exact}
Fedor~V Fomin and Petteri Kaski.
\newblock Exact exponential algorithms.
\newblock {\em Communications of the ACM}, 56(3):80--88, 2013.

\bibitem[FM02]{feder2002worst}
Tom{\'a}s Feder and Rajeev Motwani.
\newblock Worst-case time bounds for coloring and satisfiability problems.
\newblock {\em Journal of Algorithms}, 45(2):192--201, 2002.

\bibitem[Her14]{Hertli14b}
Timon Hertli.
\newblock 3-{SAT} faster and simpler - unique-{SAT} bounds for {PPSZ} hold in
  general.
\newblock {\em {SIAM} J. Comput.}, 43(2):718--729, 2014.
\newblock Announced at FOCS'11.

\bibitem[HHM{\etalchar{+}}16]{hertli2016ppsz}
Timon Hertli, Isabelle Hurbain, Sebastian Millius, Robin~A Moser, Dominik
  Scheder, and May Szedl{\'a}k.
\newblock The ppsz algorithm for constraint satisfaction problems on more than
  two colors.
\newblock In {\em International Conference on Principles and Practice of
  Constraint Programming}, pages 421--437. Springer, 2016.

\bibitem[HKZZ19]{hansen2019faster}
Thomas~Dueholm Hansen, Haim Kaplan, Or~Zamir, and Uri Zwick.
\newblock Faster $k$-{SAT} algorithms using {biased-PPSZ}.
\newblock In {\em Proceedings of the 51st Annual ACM SIGACT Symposium on Theory
  of Computing}, pages 578--589, 2019.

\bibitem[Law76]{lawler1976note}
Eugene~L Lawler.
\newblock A note on the complexity of the chromatic number problem.
\newblock 1976.

\bibitem[LLLX08]{li2008k}
Liang Li, Xin Li, Tian Liu, and Ke~Xu.
\newblock From k-sat to k-csp: Two generalized algorithms.
\newblock {\em arXiv preprint arXiv:0801.3147}, 2008.

\bibitem[LS21]{li2021impatient}
Shibo Li and Dominik Scheder.
\newblock Impatient ppsz--a faster algorithm for csp.
\newblock {\em arXiv preprint arXiv:2109.02795}, 2021.

\bibitem[MS85]{monien1985solving}
Burkhard Monien and Ewald Speckenmeyer.
\newblock Solving satisfiability in less than 2n steps.
\newblock {\em Discrete Applied Mathematics}, 10(3):287--295, 1985.

\bibitem[PPSZ98]{PPSZ98}
Ramamohan Paturi, Pavel Pudl{\'{a}}k, Michael~E. Saks, and Francis Zane.
\newblock An improved exponential-time algorithm for $k$-{SAT}.
\newblock In {\em 39th Annual Symposium on Foundations of Computer Science,
  {FOCS} '98}, pages 628--637, 1998.

\bibitem[PPSZ05]{paturi2005improved}
Ramamohan Paturi, Pavel Pudl{\'a}k, Michael~E Saks, and Francis Zane.
\newblock An improved exponential-time algorithm for {k-SAT}.
\newblock {\em Journal of the ACM (JACM)}, 52(3):337--364, 2005.

\bibitem[Sch99]{schoning1999probabilistic}
T~Schoning.
\newblock A probabilistic algorithm for {k-SAT} and constraint satisfaction
  problems.
\newblock In {\em 40th Annual Symposium on Foundations of Computer Science
  (Cat. No. 99CB37039)}, pages 410--414. IEEE, 1999.

\bibitem[Sch10]{scheder2010ppz}
Dominik Scheder.
\newblock Ppz for more than two truth values-an algorithm for constraint
  satisfaction problems.
\newblock {\em arXiv preprint arXiv:1010.5717}, 2010.

\bibitem[Tra08]{traxler2008time}
Patrick Traxler.
\newblock The time complexity of constraint satisfaction.
\newblock In {\em International Workshop on Parameterized and Exact
  Computation}, pages 190--201. Springer, 2008.

\bibitem[Woe03]{woeginger2003exact}
Gerhard~J Woeginger.
\newblock Exact algorithms for {NP-hard} problems: A survey.
\newblock In {\em Combinatorial optimization—eureka, you shrink!}, pages
  185--207. Springer, 2003.

\bibitem[Zam20]{mycoloring}
Or~Zamir.
\newblock Breaking the {$2^n$} barrier for $5$-coloring and $6$-coloring.
\newblock 2020.

\end{thebibliography}

\appendix

\end{document}